\documentclass{jocg}
\usepackage{graphicx,amssymb,amsmath,dsfont}
\usepackage{url}
\usepackage{subfig}
\usepackage{enumitem}
\usepackage{wrapfig}


\usepackage{amsthm}
\theoremstyle{plain}
\newtheorem{theorem}{Theorem}
\newtheorem{lemma}[theorem]{Lemma}

\newtheorem{prop}[theorem]{Proposition}
\theoremstyle{definition}

\def\captionstyle{}
\def\boxcaptionstyle{\raggedright}

\def\maxfigfraction{.6}

\newdimen\figboxmargin
\figboxmargin = .15in

\newdimen\figboxhang
\figboxhang = 0pt

\def\DVIscaling{1}
\def\globalscaling{1}
\def\figuredirectory{./figures}
\let\boxer=\llboxer

\def\missingfigure#1{\hbox{Missing figure #1.ps}}

%
%
\catcode `\@=11

\newbox\figurebox

\def\figbox{
\@ifnextchar[{\figboxaux}{\figboxaux[htb]}}

%
%
\long\def\figboxaux[#1#2]#3#4#5#6{
\writepict{{#3}{#4}{#5}{#6}}
\setbox\figurebox\hbox{#3}%
\if l#1\tryleftbox{#4}{#5}{#6}%
\else
\if r#1\tryrightbox{#4}{#5}{#6}%
\else
\if *#1\checktwocoloptions#2]{\box\figurebox}{#4*}{#5}{#6}%
\else\tryonecol[#1#2]{#4}{#5}{#6}%
\fi
\fi
\fi\ignorespaces}

%
%

\long\def\tryleftbox#1#2#3{
\ifdim\wd\figurebox>\maxfigfraction\columnwidth \tryonecol[htb]{#1}{#2}{#3}%
\else\leftbox{\captionbox{\box\figurebox}{#1}{#2}{#3}}\fi}

\long\def\tryrightbox#1#2#3{
\ifdim\wd\figurebox>\maxfigfraction\columnwidth \tryonecol[htb]{#1}{#2}{#3}%
\else\rightbox{\captionbox{\box\figurebox}{#1}{#2}{#3}}\fi}

\def\checktwocoloptions{
\@ifnextchar]{\floatbox[htb}{\floatbox[}}

\long\def\tryonecol[#1]#2#3#4{
\ifdim\wd\figurebox>\columnwidth \floatbox[#1]{\box\figurebox}{#2*}{#3}{#4}%
\else\floatbox[#1]{\box\figurebox}{#2}{#3}{#4}\fi}

%
%
%

\long\def\floatbox[#1]#2#3#4#5{%
\begin{#3}[#1]
\hbox to \hsize{\hfil#2\hfil}
\captionandlabel{#3}{#4}{#5}
\end{#3}
}

%
%
%
%
%
%
%

\long\def\captionbox#1#2#3#4{
\setbox\figurebox\hbox{#1}%
\parbox[t]{\wd\figurebox}{%
\bigskip\box\figurebox
\let\captionstyle=\boxcaptionstyle
\captionandlabel{#2}{#3}{#4}
\bigskip
}}

\def\captionandlabel#1#2#3{
\def\testit{#3}%
\ifx\testit\empty\else
\writecapt{{#1}{#2}{#3}}
\captypeunstarred#1*.
\getcaption#3\endc@ption
\def\testit{#2}
\ifx\testit\empty\else\label{#2}\fi
\fi}

\def\captypeunstarred#1*#2.{
\def\@captype{#1}}

%
%

\def\getcaption{\@ifnextchar[{\getcaptwo}{\getcapone}}
\long\def\getcapone#1\endc@ption{\caption[#1]%
{\def\baselinestretch{1}\Large\normalsize\captionstyle\ignorespaces #1}}
\long\def\getcaptwo[#1]#2\endc@ption{\caption[#1]%
{\def\baselinestretch{1}\Large\normalsize\captionstyle\ignorespaces #2}}

%
%
%

\newdimen\figboxht
\newcount\figboxn

\newcount\figboxlines
\newdimen\figboxwid

\newif\ifisleftbox

\long\def\leftbox#1{%
\setbox\figurebox\hbox{#1}\global\isleftboxtrue
\startmarginbox
\vadjust{\smash{\rlap{\hskip\hsize\hskip\figboxhang
\llap{\raise.7\baselineskip\box\figurebox\hskip\rightskip}}}}%
\endmarginbox%
}

\long\def\rightbox#1{%
\setbox\figurebox\hbox{#1}\global\isleftboxfalse
\startmarginbox
\smash{\llap{\raise.7\baselineskip\box\figurebox\hskip\figboxmargin}}%
\endmarginbox%
}

\def\startmarginbox{%
\ifvmode\passpict\let\endmarginbox=\indent
\else\message{WARNING: marginbox in not in vmode}\hfilneg\ \passpict
\let\endmarginbox=\relax\fi
\figboxht=\dp\figurebox
\advance\figboxht by 1.3\baselineskip
\vskip.95\figboxht\penalty-300\vskip-.95\figboxht
\divide\figboxht by\baselineskip
\global\figboxlines=\figboxht
\global\figboxwid=\wd\figurebox
\global\advance\figboxwid by \figboxmargin
\global\advance\figboxwid by -\figboxhang
\setmypar\noindent}
%

%
%
%
\def\addlines#1{\global\advance\figboxlines by #1\myparshape}
\def\zerolines{\origpar\global\figboxlines=0\myparshape}

%
%
\def\passpict{\par\ifnum\figboxlines>1\vskip\figboxlines\baselineskip
\zerolines\fi}

\def\emptybox#1#2{\hbox to #1{\vbox to #2{\vss}\hss}}


\global\let\origpar=\@@par
\global\let\dopar=\origpar
\global\def\@@par{\dopar}
\@setpar{\dopar}

\def\setmypar{\global\let\dopar=\mypar
\global\prevgraf=0\myparshape}

\def\mypar{\origpar\global\advance\figboxlines by -\prevgraf%
\global\prevgraf=0\myparshape}

\def\myparshape{\relax%
\ifnum\figboxlines>1\theparshape \else
\global\hangindent=0pt\global\hangafter=1
\global\let\dopar=\origpar\fi}

\def\theparshape{%
\ifisleftbox\global\hangindent=-\figboxwid 
\else\global\hangindent=\figboxwid \fi
\global\hangafter=-\figboxlines \global\advance\hangafter by 1%
}

%
%
%
%

\def\definefnum#1{
\def\fnum@figure{Figure \ref{#1}}%
\def\fnum@table{Table \ref{#1}}%
\def\fnum@code{Algorithm \ref{#1}}%
}

\def\writepict#1{}
\def\writecapt#1{}

%

\def\journalpicts#1{
\newwrite\pictfile
\newwrite\captfile
\openout\pictfile\jobname.pic
\openout\captfile\jobname.cap
\gdef\writepict##1{\unexpandedwrite\pictfile{\doit##1}}%
\gdef\writecapt##1{\unexpandedwrite\captfile{\doit##1}}%
\global\let\ENDdocument=\enddocument
\gdef\enddocument{\DOjournalpicts{#1}\ENDdocument}
}

%
%
\def\DOjournalpicts#1{{%
\def\writepict##1{}\closeout\pictfile
\def\writecapt##1{}\closeout\captfile
\@fileswfalse
\onecolumn
\def\globalscaling{#1}
\def\doit##1##2##3##4{
\figboxaux[t]{\hss##1\hss}{##2}{}{}%
\vspace*{1in}
\definefnum{##3}
\captionandlabel{##2}{}{##4}
\clearpage}%
\input\jobname.pic
\def\doit##1##2##3{
\definefnum{##2}
\captionandlabel{##1}{}{##3}}%
\raggedright\let\captionstyle=\raggedright
\def\@makecaption##1##2{##1: ##2\par}
\input\jobname.cap
}}

%
%
%
%
%
%

%
%

\def\llboxer#1{\vbox to \figboxht{\vfil\hbox to \figboxwid{#1\hfill}}}
\def\lcboxer#1{\vbox to \figboxht{\vfil\hbox to \figboxwid{\hfill#1\hfill}}}
\def\oldboxer#1{\vbox to \figboxht{\vfil
                      \hbox to \figboxwid{\hfill\llap{#1\hskip4.25in}\hfill}}}
\def\ulboxer#1{\vbox to \figboxht{\hbox to \figboxwid{#1\hfill}\vfil}}
\def\ccboxer#1{\vbox to \figboxht{\vfil
                        \hbox to \figboxwid{\hfill#1\hfill}\vfil}}

%
{\catcode`\p=12\catcode`\t=12
\gdef\removedimen#1pt{#1}}

\def\defscaled#1#2{#2=\DVIscaling#2%
\xdef#1{\expandafter\removedimen\the#2}}

\def\DVIspace{ }
\newdimen\hscalefactor
\newdimen\vscalefactor

\def\scale#1{\horizscale{#1}\vertscale{#1}}
\def\horizscale#1{\hscalefactor=#1\hscalefactor\figboxht=#1\figboxht}
\def\vertscale#1{\vscalefactor=#1\vscalefactor\figboxwid=#1\figboxwid}

%

\def\boxps{%
\@ifnextchar[{\boxpsaux}{\boxpsaux[\relax]}}

\def\boxpsaux[#1]#2#3#4#5{%
{\figboxwid#4\figboxht#5\hscalefactor=1pt\vscalefactor=1pt%
\scale{#3}%
\scale{\globalscaling}%
#1%
\defscaled\DVIhscale\hscalefactor
\defscaled\DVIvscale\vscalefactor
\boxer{\includegraphics{\figpsfilename\DVIspace}}}%
}

%
%
%
\newread\Epsffilein
\newif\ifEpsffileok
\newif\ifEpsfbbfound

\newdimen\pspoints
\pspoints=1in
\divide\pspoints by 72

\def\boxeps{%
\@ifnextchar[{\boxepsaux}{\boxepsaux[\relax]}}
\def\boxepsaux[#1]#2#3{%
%
%
\gdef\figpsfilename{\figuredirectory/#2.ps}
\openin\Epsffilein=\figuredirectory/#2.ps
\ifeof\Epsffilein 
\gdef\figpsfilename{\figuredirectory/#2.eps}
\openin\Epsffilein=\figuredirectory/#2.eps \fi
\ifeof\Epsffilein\message{I couldn't open \figuredirectory/#2.ps or \figpsfilename}%
\missingfigure{#2}
\else
%
%
   {\Epsffileoktrue\Epsfbbfoundfalse
    \catcode`\%=11 \catcode`\\=11
    \catcode`\{=11 \catcode`\}=11
    \catcode`\$=11 \catcode`\^=11
    \catcode`\&=11 \catcode`\#=11
    \catcode`\~=11 \catcode`\_=11
    \loop
       \read\Epsffilein to \Epsffileline
       \ifeof\Epsffilein\Epsffileokfalse\else
%
%
          \expandafter\Epsfaux\Epsffileline . .\\%
       \fi
   \ifEpsffileok\repeat
   \ifEpsfbbfound
        \figboxht=\Epsfury\pspoints
        \advance\figboxht by-\Epsflly\pspoints
        \figboxwid=\Epsfurx\pspoints
        \advance\figboxwid by-\Epsfllx\pspoints
   \else
        \message{No bounding box comment in \figpsfilename }%
        \figboxwid=2in\figboxht=1in%
   \fi%
   \immediate\closein\Epsffilein
   \hscalefactor=1pt\vscalefactor=1pt%
   \scale{#3}%
   \scale{\globalscaling}%
   #1%
   \defscaled\DVIhscale\hscalefactor
   \defscaled\DVIvscale\vscalefactor
   \hscalefactor=-\Epsfllx\hscalefactor
   \hscalefactor=1.00375\hscalefactor
   \defscaled\DVIhoffset\hscalefactor
   \vscalefactor=-\Epsflly\vscalefactor
   \vscalefactor=1.00375\vscalefactor
   \defscaled\DVIvoffset\vscalefactor
   \llboxer{\includegraphics{\figpsfilename\DVIspace}} }%
\fi
}%
%
%
{\catcode`\%=11 \global\let\Epsfpar=\par
\global\let\Epsfpercent=
%
%
\long\def\Epsfaux#1#2 #3\\{\relax\ifx#1\Epsfpercent
   \def\testit{#2}\ifx\testit\Epsfbblit
      \Epsfsize #3 . . . .\\%
      \global\Epsffileokfalse
      \global\Epsfbbfoundtrue
   \fi\else\ifx#1\Epsfpar\else\global\Epsffileokfalse\fi\fi}%
%
%
\def\Epsfsize#1 #2 #3 #4 #5\\{\global\def\Epsfllx{#1}\global\def\Epsflly{#2}%
   \global\def\Epsfurx{#3}\global\def\Epsfury{#4}}%

\catcode`\@=12                  

%
%
%
%

\def\pic#1;#2;#3;#4\par{\picsc#1;#2;#3;1;#4\par}

\def\picsc#1;#2;#3;#4;#5\par{
\figbox[htb]{\boxeps{#1}{#4}
}{figure}{#1}{%
#5}}

\def\mpic#1;#2;#3;#4\par{\mpicsc#1;#2;#3;1;#4\par}

\def\mpicsc#1;#2;#3;#4;#5\par{
\figbox[l]{\boxeps{#1}{#4}
}{figure}{#1}{%
#5}}

\newtheorem{definition}[theorem]{Definition}
\def\R{\mathds{R}}
\let\epsilon=\varepsilon

\def\tomath#1{\relax\ifmmode#1\else$#1$\fi}

\def\fref#1{Figure~\ref{fig:#1}}
\def\sref#1{Section~\ref{sec:#1}}

\def\lref#1{Lemma~\ref{lem:#1}}
\def\tref#1{Theorem~\ref{thm:#1}}

\def\ray#1{\hbox{\vbox{\offinterlineskip\setbox0\hbox{$#1$}
	\hbox to \wd0{\hss$\rightharpoonup$\hss}\vskip-1.0pt\box0}}}
\def\lin#1{\hbox{\vbox{\offinterlineskip\setbox0\hbox{$#1$}
	\hbox to \wd0{\hss$\longleftrightarrow$\hss}\vskip-1.0pt\box0}}}

\long\def\comm#1{\ignorespaces}
\def\comments{\long\def\comm##1{\message{COMMENT: ##1}{\bf(( ##1 ))}}}

\long\def\pa#1{\ignorespaces}

\comments


\title{\MakeUppercase{Covering Folded Shapes}\thanks{A preliminary
    version of this work was presented at the 25th Canadian Conference
    on Computational Geometry (CCCG'13)~\cite{aaddfhlsw-cfs-13}.}}

\date{}

\index{Aichholzer, Oswin}
\index{Aloupis, Greg}
\index{Demaine, Erik D.}
\index{Demaine, Martin L.}
\index{Fekete, S\'andor P.}
\index{Hoffmann, Michael}
\index{Lubiw, Anna}
\index{Snoeyink, Jack}
\index{Winslow, Andrew}

\author{
  Oswin Aichholzer,%
  \thanks{\affil{Institute for Software Technology, TU Graz, Inffeldgasse 16b/II,
      A-8010 Graz, Austria},
    \email{oaich @ist.tugraz.at}. Partially supported by the ESF EUROCORES
    programme EuroGIGA---CRP `ComPoSe', Austrian Science Fund (FWF):
    I648-N18.}\,
  Greg Aloupis,%
  \thanks{\affil{Department of Computer Science, Tufts Univ.,
      161 College Ave., Medford, MA 02155, USA},
    \email{aloupis.greg@gmail.com},\email{awinslow@cs.tufts.edu}. 
      Partially supported by NSF grant CBET-0941538.}\,
  Erik D. Demaine,%
    \thanks{\affil{Computer Science and Artificial Intelligence Laboratory, MIT,
      32 Vassar St., Cambridge, MA 02139, USA},
      \email{{edemaine,mdemaine}@mit.edu}.}\,
  Martin L. Demaine,%
  \footnotemark[4]\,
  S\'andor P. Fekete,%
  \thanks{\affil{Department of Computer Science,
      TU Braunschweig,
      M\"uhlenpfordtstr.~23, 38106 Braunschweig, Germany},
    \email{s.fekete@tu-bs.de}}\,
 Michael Hoffmann,%
 \thanks{\affil{Department of Computer Science, ETH Z\"urich,
     Switzerland},
   \email{hoffmann@inf.ethz.ch}. Partially supported by the ESF EUROCORES
   programme EuroGIGA, CRP GraDR and 
   SNF Project 20GG21-134306.}\,
 Anna Lubiw,%
 \thanks{\affil{David R. Cheriton School of Computer Science,
     Univ.~Waterloo, Waterloo, ONT N2L 3G1, Canada},
 \email{alubiw@uwaterloo.ca}}\,
 Jack Snoeyink,%
 \thanks{\affil{Department of Computer Science, University of North Carolina,
     Chapel Hill, NC 27599, USA},
   \email{snoeyink@cs.unc.edu}. Partially supported by an NSF grant.}\,
 Andrew Winslow%
 \footnotemark[3]
}

\begin{document}
\pagestyle{fancy}
\maketitle
\begin{abstract}
Can folding a piece of paper flat make it larger? 
We explore whether a shape $S$ must be scaled to cover a flat-folded copy
of itself. We consider both single folds and arbitrary folds (continuous
piecewise isometries $S\to \R^2$). The underlying problem is motivated
by computational origami, and is related to other covering and fixturing
problems, such as Lebesgue's universal cover problem and force closure
grasps. In addition to considering special shapes (squares, equilateral
triangles, polygons and disks), we give upper and lower bounds on scale
factors for single folds of convex objects and arbitrary folds of
simply connected objects.
\end{abstract}


\section{Introduction}
\label{sec:intro}

\begin{wrapfigure}[8]{r}{6cm}
  \centering
  \vspace*{-4mm}
  \includegraphics[width=4.0cm]{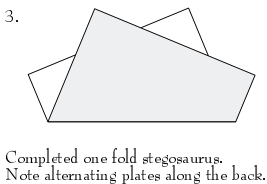}
  \caption{From Wu's diagram.}
  \vspace{-10pt}
  \label{fig:stego}
\end{wrapfigure}
We explore how folds can make an origami model larger, in the sense that 
Joseph Wu's one-fold stegosaurus\footnote{An origami joke. 
\url{http://www.josephwu.com/Files/PDF/stegosaurus.pdf}}
cannot be covered by a copy of the square from which it is folded.
In more technical terms, we consider how to cover all possible folded versions of a given shape
by a scaled copy of the shape itself, with the objective of keeping the scale
factor as small as possible.  



Problems of covering a family of shapes by one minimum-cost object
have a long tradition in geometry. The classical prototype is
Lebesgue's universal cover problem from 1914~\cite{p-uev-20}, which
asks for a planar convex set of minimum area that can cover any planar
set of diameter at most one; Brass and Sharifi~\cite{bs-lbluc-05} give
upper and lower bounds, but a gap remains.  A
similar question, also with a gap, is Moser's worm
problem~\cite{npl-wplm-92,pww-cnsua-07}, which asks for a convex set
of minimum area that can cover any planar curve of unit length.  
As reported in 
the book by
Brass, Moser, and Pach~\cite[Chapter 11.4]{bmp-rpdg-05}, there is a large family of
well-studied, but notoriously difficult problems parameterized by
\begin{itemize}
\item the family of sets to be covered,
\item the sets allowed as covers,
\item the size measure to be minimized, and
\item the allowed transformations.
\end{itemize}

\smallbreak
In this paper we consider a given {\it shape $S$}, which is a bounded region of the plane that is a
simply connected (no holes)
closed 2-manifold with boundary (every interior point has a disk
neighborhood and every boundary point  a half-disk).
A shape $S$ may  possess more specific properties:
e.g., it may be convex, a (convex or non-convex) polygon, a disk, a square, or an
equilateral triangle.  

We denote by $cS$, for $c>0$, the family of copies of $S$ that
have been scaled by $c$, and then rotated, reflected, and translated.  
We consider
upper and lower bounds on the smallest constant $c$ such that, for
any~$F$ obtained by folding~$S$, some member of $cS$ contains or {\it
  covers}~$F$.  Let us be more specific about folding.

A {\it single fold of $S$} with line $\ell$
reflects one or more connected components of the difference $S\setminus \ell$ across~$\ell$. Let ${\cal F}_1(S)$ denote the family of shapes that can
be generated by 
a single fold of $S$. 
An {\it arbitrary fold of $S$} is a continuous,
piecewise isometry from $S\to \R^2$, which partitions $S$ into a
finite number of polygons and maps each rigidly to the plane so that
the images of shared boundary points agree. The key property of
arbitrary folds 
is that the length of any path in~$S$ equals the length of its image
in~$\R^2$.  Let ${\cal F}(S)$ denote the family of shapes that can
be generated by an arbitrary fold of $S$.  

The single fold and arbitrary fold are two simple notions of
flat folding that avoid concerns of layering and fold order. 
Note that any upper bound that we prove for arbitrary
folds applies to single folds, too.  And, although the image of an
arbitrary fold need not be the result of single folds, our
lower bounds happen to be limits of finite sequences of single folds.
Our results  apply to 3-d folded shapes if 
{\it covering} is understood to mean covering the orthogonal
projection to the plane.

Throughout this paper, we consider 
the following type of covers:
\begin{definition}
\label{def:cover}
For a given shape $S$ and $c>0$, $c S$ is an {\em origami cover} of $S$ 
if any member of ${\cal F}(S)$  can be covered by
some member of $c S$.  The {\em
origami cover factor} $c^*(S)$ is the smallest such $c$, which may be
$\infty$:
\[
c^*(S)=\inf \{c\mid \hbox{$cS$ is an origami cover of $S$}\}.
\]

Analogously, $c S$ is a {\em $1$-fold cover} of $S$ if any member of
${\cal F}_1(S)$ can be covered by some member of $c S$; and the {\em
  $1$-fold cover factor} $c_1^*(S)$  is the smallest such $c$:
\[
c_1^*(S)=\inf \{c\mid \hbox{$cS$ is a 1-fold cover of $S$}\}.
\]
\end{definition}

Note that by definition $\mathcal{F}_1(S)\subseteq\mathcal{F}(S)$ and
so $c_1^*(S)\le c^*(S)$, for any shape $S$.

Questions of whether folding can increase area or perimeter  
have been considered before.
It is clear that folding a piece of paper introduces overlap, so area can
only decrease.  On the other hand, the perimeter of a rectangle or square can
be greater in a folded than an unfolded state---known 
as Arnold's ruble note or the Margulis napkin problem~\cite{a-ap-05,l-odsmmaa-03}. 
Folding techniques that increase perimeter, like rumpling and
pleat-sinking, make very small but spiky  models that are easily
covered by the original paper shape, however.

Let us recall some common geometric parameters of shapes and derive a
first simple general upper bound for the origami cover factor in terms
of these parameters. For a given shape $S$, an {\it incircle}, $C_r$,
is a circle of maximum radius (the {\it inradius}~$r$) contained
in~$S$.  Similarly, the {\it circumcircle}, $C_R$, is the circle of
minimum radius (the {\it circumradius}) that contains~$S$. In order to
extend these notions to non-convex shapes, we consider geodesic
distances, that is,~the distance between two points in $S$ is the
length of a shortest path that connects the points and stays within
$S$. The maximum geodesic distance $D$ between any two points of~$S$ is the
{\it geodesic diameter} of $S$.
A {\it geodesic center} is a point in $S$ that minimizes the maximum
distance (the {\it geodesic radius}~$R$) to all points of~$S$. For
convex shapes the geodesic radius $R$ is also the circumradius. Jung's
theorem in the plane says $\sqrt 3 R\le D\le 2R$, with the equilateral
triangle and circle giving the two
extremes~\cite[ch.~16]{rademacher1990enjoyment}.  For any
shape 
$S$, these parameters give an upper bound on the origami cover factor.
\begin{lemma}\label{lem:ub}
  Any shape $S$ with inradius $r$ and geodesic radius $R$ has an
  origami cover factor \[c^*(S)\le R/r.\]
\end{lemma}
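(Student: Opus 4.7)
The plan is to exploit two facts: a piecewise isometry is non-expansive in Euclidean distance, and scaling by $R/r$ turns an incircle of radius $r$ into a disk of radius $R$ inside the scaled shape.

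First I would establish the key geometric inequality. Let $f:S\to\R^2$ be an arbitrary fold, and let $p$ be a geodesic center of $S$. For any point $x\in S$, there is a path $\gamma\subseteq S$ from $p$ to $x$ of length at most $R$. Because $f$ is continuous and is an isometry on each piece of the partition, the image $f(\gamma)$ is a path in $\R^2$ from $f(p)$ to $f(x)$ of the \emph{same} length, so the Euclidean distance $\|f(p)-f(x)\|$ is at most $R$. Consequently, the entire folded image $F=f(S)$ lies inside the closed disk $D(f(p),R)$ of radius $R$ centered at $f(p)$.

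Next I would use the inradius to produce a cover. Let $C_r\subseteq S$ be an incircle of radius $r$, centered at some point $q\in S$. Then the scaled copy $cS$ with $c=R/r$ contains a disk of radius $R$ centered at $cq$. Rigidly moving this scaled copy by the translation that sends $cq$ to $f(p)$ (no rotation or reflection is required) places a disk of radius $R$ centered at $f(p)$ inside the translated copy of $cS$. Since $F\subseteq D(f(p),R)$, this copy of $cS$ covers $F$.

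Since the argument works for an arbitrary folded image $F\in\mathcal{F}(S)$, the same scale factor $c=R/r$ suffices uniformly, which yields $c^*(S)\le R/r$. There is no serious obstacle: the only subtle point is the non-expansiveness of piecewise isometries, and once that is stated the rest is just aligning the incircle of $cS$ with the bounding disk of $F$.
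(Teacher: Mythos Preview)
Your proof is correct and is essentially the same argument as the paper's: place the image of a geodesic center at the origin so that the folded state lies in a disk of radius $R$, then cover that disk with the scaled incircle of $(R/r)S$. You have simply spelled out in more detail the non-expansiveness step that the paper compresses into the phrase ``no path in $F$ can be more than $R$ from the origin.''
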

\begin{proof}
Place any folded state $F\in {\cal F}(S)$ in the plane so that the image of a 
geodesic center is at the origin.  Choose a member of $(R/r) S$ with
an incircle center at the origin.  Because no path in $F$ can be more
than~$R$ from the origin, the scaled incircle covers~$F$. 
\end{proof}

There are shapes for which the bound of \lref{ub} is tight. For
instance, if $S$ is a disk, then $r=R$ and $c^*(S)=1$.




\paragraph{Results.} The remainder of the paper is divided into two
parts. First, in \sref{single} we consider single folds and present
bounds for the $1$-fold cover factor of various families of shapes:
\begin{itemize}
\item If $S$ is a convex shape with inradius $r$ and circumradius $R$,
  then $c^*(S)\ge c_1^*(S)\ge\kappa R/r$, where
  $\kappa=((\sqrt{5}-1)/2)^{5/2}\approx 0.300283$. Note that this
  bound is within a constant factor of the general upper bound from
  \lref{ub}.
\item If $S$ is an equilateral triangle, then $c_1^*(S)=4/3$.
\item If $S$ is a square, then $c_1^*(S)=\varrho\approx 1.105224$,
  where $\varrho$ denotes the largest (and only positive) real root of
  the polynomial
  $\Phi(x)=40x^{12}+508x^{11}+1071x^{10}+930x^9-265x^8-1464x^7-1450x^6-524x^5+58x^4+76x^3+3x^2-6x-1$.
\item If $S$ is a polygon, then $c_1^*(S)>1$, that is, for any polygon
  $S$ there is a single fold such that the resulting folded state
  cannot be covered with a copy of $S$.
\item On the other hand, we describe an infinite family of shapes for
  which $c_1^*(S)=1$; these are shapes cut from a disk.
\end{itemize}

Then in \sref{arb} we discuss arbitrary folds and present a lower
bound for the origami cover factor for a more general family of
shapes: 
\begin{itemize}
\item For a simply connected shape $S$ with inradius $r$ and
geodesic radius $R$, we have $c^*(S)\ge\kappa R/r$, where
$\kappa=\sqrt{3}/(2\pi)\approx 0.27566$. 
\item We describe a family
of shapes that have an origami cover factor of $1$, like disks. In
fact, these shapes are constructed as a union of two disks.
\end{itemize}




\section{Single Folds}\label{sec:single}
In this section we explore the 1-fold cover factor $c_1^*(S)$, giving general bounds for convex $S$ and for polygons, and  the exact values for equilateral triangles, squares, and a family derived from disks.

\subsection{Convex Shapes}
\label{sec:convex}

For a convex set $S$, there is a lower bound
for the 1-fold cover factor $c_1^*(S)$ that is within a constant factor of the
upper bound given by \lref{ub}.
\begin{theorem}
\label{thm:convex}
Let $S$ be a convex shape with inradius $r$ and circumradius $R$.
Then $\kappa R/r\leq c^*(S)\leq R/r$ for
an appropriate constant $\kappa=((\sqrt{5}-1)/2)^{5/2}\approx 0.300283$. 
\end{theorem}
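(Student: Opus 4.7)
The upper bound $c^*(S) \le R/r$ is immediate from Lemma~\ref{lem:ub}, since for a convex shape the geodesic radius coincides with the circumradius. Because $c_1^*(S) \le c^*(S)$, the substantive task is the lower bound $c_1^*(S) \ge \kappa R/r$, and the plan is to exhibit, for every convex $S$, an explicit single fold whose image $F$ cannot be covered by any $cS$ with $c<\kappa R/r$.

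The only properties of $S$ I intend to exploit are the annular sandwich $B(I,r)\subseteq S\subseteq B(C,R)$ (for the incircle center $I$ and circumcircle center $C$) together with Jung's inequality $D\ge\sqrt{3}\,R$ on the Euclidean diameter $D$. I would fix a diameter chord $pq$ of $S$, orient coordinates with $pq$ horizontal, and fold along a chord $\ell$ placed at horizontal distance $\alpha D$ from $p$ (with an appropriately chosen orientation), reflecting the component of $S\setminus\ell$ that contains $q$. Here $\alpha\in(0,1/2)$ is a parameter to be optimized at the end. The folded image $F$ then contains $p$ as well as the reflected copy $q'$ of $q$, giving $F$ a horizontal reach on the order of $(1-2\alpha)D$ leftward from $\ell$. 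Since the incircle has radius $r$, the reflected portion of $S$ also contributes a perpendicular ``bulge'' of width of order $r$ at a location distant from $\ell$, so $F$ carries two geometrically incompatible features: a long horizontal reach and a transverse bulge.

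Next, any covering copy $cS$ is sandwiched between a disk of radius $cr$ and a disk of radius $cR$. I would use these two bounds to argue that no rigid placement of $cS$ can simultaneously cover the reach and the bulge of $F$ unless $c\ge g(\alpha)\,R/r$ for an explicit function $g$ depending only on $\alpha$. Maximizing $g(\alpha)$ over $\alpha\in(0,1/2)$ then yields the constant $\kappa=\phi^{5/2}$ with $\phi=(\sqrt{5}-1)/2$; the golden ratio enters via the polynomial equation characterizing the optimal $\alpha$, and the fractional exponent reflects the composition of the inradius and diameter inequalities.

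The main obstacle is the covering step. Because a covering copy may be rotated, reflected, and translated freely, I must rule out every placement of $cS$ when $c<\kappa R/r$; in particular, I must show that $cS$ cannot be tilted diagonally in a way that straddles both the horizontal reach and the transverse bulge. The analysis must also be uniform in $S$, using only the annular sandwich and no finer structure. Quantifying the diagonal-placement constraint and balancing the two directions via the optimal $\alpha$ is where the specific value of $\kappa$ finally appears.
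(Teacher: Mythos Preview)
Your proposal has a genuine gap: the ``reach plus bulge'' mechanism cannot produce the bound.  The bulge you invoke comes from the incircle of $S$ and so has size of order~$r$, not of order~$R$.  For a thin convex shape (say a $2r\times 2R$ rectangle) any single fold leaves the image $F$ with width still on the order of~$r$ in every direction; nothing in $F$ by itself is large enough to force $c\ge\kappa R/r$.  The diagonal-placement analysis you anticipate therefore has nothing to bite on, and Jung's inequality and the parameter~$\alpha$ do not rescue the argument: there is no computation of $g(\alpha)$, and no reason offered for why the golden ratio should appear.

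The idea you are missing is that one should exploit convexity of the \emph{covering} copy $cS$, not any thickness of the folded image $F$.  The paper's proof works with the circumcircle rather than the incircle: it takes the circumcenter $p^*$ and two points $t_1,t_2\in\partial C_R\cap S$ whose central angle lies in $[\tfrac{2}{3}\pi,\pi]$, then folds along a line through $p^*$ so that the image $t_2'$ makes a chosen angle $\varphi\in[0,\tfrac{2}{3}\pi]$ with $t_1$ at $p^*$.  The folded image $F$ is guaranteed only to contain the three \emph{points} $t_1,p^*,t_2'$; but since $cS$ is convex, covering those three points forces $cS$ to contain the full triangle $\triangle(t_1,p^*,t_2')$ and in particular its inscribed disk.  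That disk has radius $r_\varphi=\tfrac{R}{2}\,\sin\varphi/(1+\sin(\varphi/2))$, maximized at $r_\varphi=\kappa R$ with $\kappa=((\sqrt{5}-1)/2)^{5/2}$, whereas the largest disk inside $cS$ has radius $cr$; hence $c\ge\kappa R/r$.  The entire lower bound thus rests on a three-point configuration and the convexity of the cover, not on any width feature of $F$ itself.
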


\begin{proof}
The upper bound is from \lref{ub}. 

For the lower bound, consider the center $p^*$ of the circumcircle $C_R$ that contains $S$. Because 
$R$ is smallest possible, the set of points where the boundary of $C_R$ touches~$S$, $T:=\partial C_R\cap S$, must
contain at least two points, and no open halfplane through $p^*$ can contain all of~$T$. 
If $|T|=2$, then these two points $t_1$ and $t_2$ must lie on a diameter of $C_R$;
if $|T|>2$, there must be two points $t_1, t_2\in T$ that form a  
central angle $\angle(t_1,p^*,t_2)$ in $[\frac{2}{3}\pi,\pi]$. 
Thus, for any $\varphi\in [0, \frac{2}{3}\pi]$, we can
perform a single fold along a line through $p^*$ that maps $t_2$ to $t_2'$ such that the central angle
$\angle(t_1,p^*,t_2')$ is $\varphi$.  

\begin{figure}[ht]
\centering
\includegraphics
{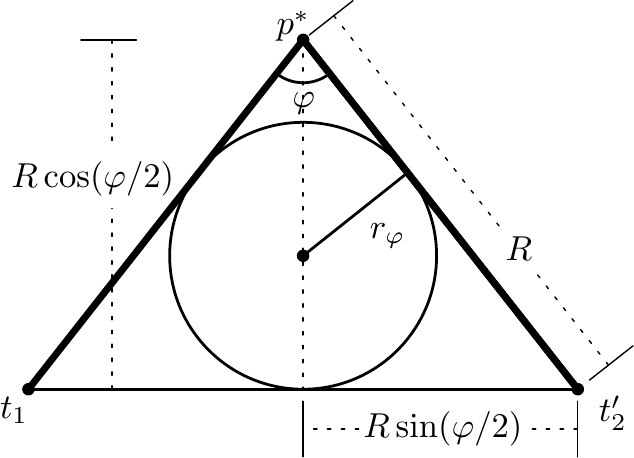}
\caption{Parameters for calculating the 1-fold cover factor for convex~$S$.}
\label{fig:convex}
\end{figure}

Now, after folding, consider a cover of the three points $t_1,p^*,t_2'$ by $c S$ for some $c>0$. As each member of $c S$ 
is convex, in covering the  triangle $\Delta(t_1,p^*,t_2')$, it also covers the largest circle $C_\Delta$ contained
in $\Delta(t_1,p^*,t_2')$; let $r_\varphi$ be the radius of this
circle, see \fref{convex}. 
Using elementary geometry we obtain $r_\varphi=\frac{R}{2}\frac{\sin(\varphi)}{1+\sin(\varphi/2)}$,
which is maximized at
$\varphi=2\arctan\bigl(((\sqrt{5}-1)/2)^{1/2}\bigr) \approx
76.345^\circ$, giving $r_\varphi=\kappa R$ as the radius of
$C_\Delta$.  Because the largest circle covered by a member of $cS$
has radius~$cr$, and $C_\Delta$ is covered by $c S$, we conclude that
$c\geq \kappa R/r$.
\end{proof}


\subsection{Cover Factors for Specific Polygons}
\label{sec:specific}
In this section we determine $c_1^*(S)$ when $S$ is an
equilateral triangle or a square. These two cases illustrate analysis techniques that could in theory be extended to other polygons, except that the number of cases explodes, especially for non-convex shapes.

{\begin{figure}[ht] 
\centering
\subfloat[Folded equilateral triangle and two minimum
enclosing triangles with $c_1^*=4/3$.]
{\hspace*{10pt}\includegraphics
{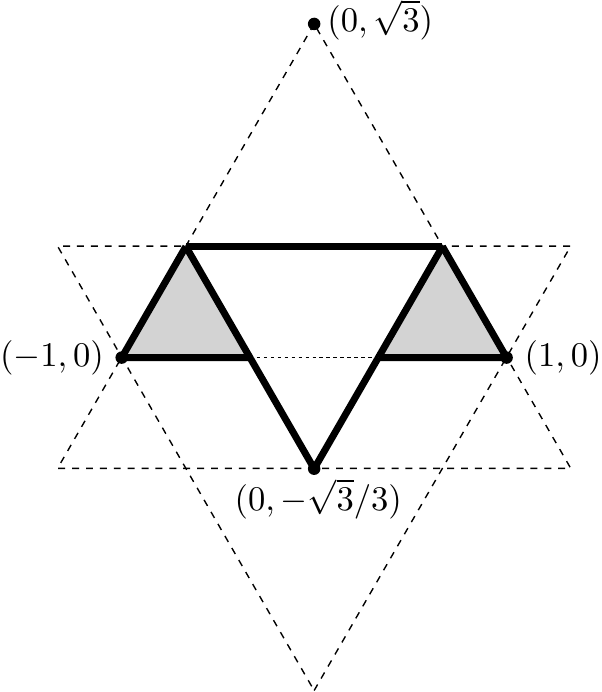}\hspace*{10pt}
\label{fig:triangle}}
\hfil%
\subfloat[Folded square and three minimum
enclosing squares with $c_1^*\approx1.105$.]
{\hspace*{10pt}\includegraphics
{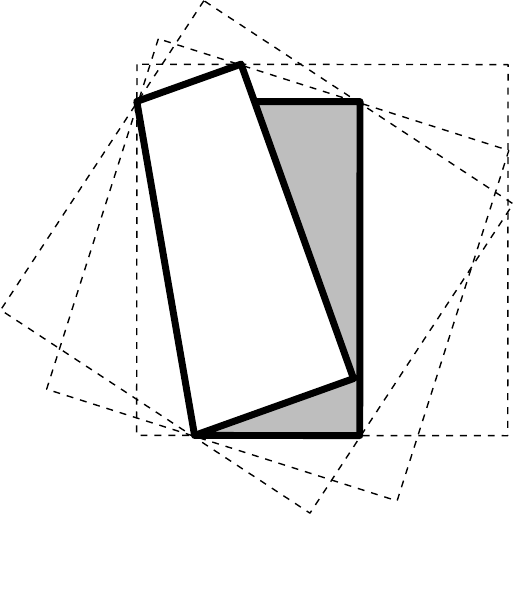}\hspace*{10pt}
\label{fig:3enclosingsq}}
\caption{Optimal $1$-fold covers for equilateral triangle and square.}
\end{figure}

An important subproblem is to fix the folded shape $F$ and compute,
for the given shape~$S$, the smallest $c$ such that $cS$
covers~$F$. With four degrees of freedom for translation, rotation,
and scaling, we expect that four first-order contacts between the
boundaries of $S$ and $F$ will define the minimum $c$. In polygons,
these will be four pairs consisting of a vertex $v$ of $F$ and an edge
$e$ of $S$ such that $v$ lies on $e$.


For equilateral triangles, we can use the following reformulation of a
lemma by DePano and Aggarwal:
\begin{lemma}[{\cite[Lemma 2]{da-frecp-84}}]
\label{lem:MET-characterization}
The smallest enclosing equilateral triangle of a polygonal shape has
at least one vertex of the shape on each side and at least one side of
the triangle contains two vertices of the shape. (A vertex of the
shape in the corner of the triangle counts for both incident sides.)
\end{lemma}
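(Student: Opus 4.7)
I'd prove the two parts of the statement separately, each by a perturbation argument. For the first part (every side of $T$ carries at least one shape vertex), I'd argue by contradiction: suppose at the optimum $T^*$ some side $s_1$ has no shape vertex on it. Translate $T^*$ by a small vector along the inward normal of $s_1$. Since the three inward normals of an equilateral triangle are pairwise at $120^\circ$ and sum to zero, this translation moves $s_2$ and $s_3$ outward relative to any shape vertex lying on them, while $s_1$ retains positive clearance for sufficiently small translations. Then shrink the translated triangle about its incenter by a factor slightly less than $1$; by continuity containment survives, but the scale strictly decreases, contradicting the minimality of $T^*$.

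For the second part, I would reformulate the problem as a one-dimensional optimization over the orientation $\theta$ of the triangle. For each fixed $\theta$ the minimum-scale equilateral triangle of orientation $\theta$ enclosing $S$ is uniquely pinned down by the three support lines of $\operatorname{conv}(S)$ perpendicular to the three inward normals; each such support line already touches a vertex of $\operatorname{conv}(S)$, so condition (a) is automatic at every $\theta$. Writing $v_i(\theta)$ for the support vertex on side $s_i$ with inward normal $\mathbf{n}_i(\theta)$, and starting from the incenter conditions $(\mathbf{c}-v_i)\cdot\mathbf{n}_i=r$ for $i=1,2,3$, the identity $\sum_i \mathbf{n}_i(\theta)=0$ immediately yields
\[
  r(\theta) \;=\; -\tfrac{1}{3}\sum_{i=1}^{3} v_i(\theta)\cdot \mathbf{n}_i(\theta)
\]
for the inradius of the enclosing triangle. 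As $\theta$ sweeps, the $v_i(\theta)$ are piecewise constant: they remain fixed on open \emph{smooth arcs} and switch at \emph{breakpoints}, which occur precisely when an edge of $\operatorname{conv}(S)$ becomes parallel to a side of $T$ --- that is, exactly when two distinct shape vertices lie on a single side.

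On a smooth arc the $v_i$ are constant, so the displayed formula shows that $r(\theta)$ is a pure first-harmonic sinusoid $A\cos\theta+B\sin\theta$ with no constant term (the constant term vanishes because $\sum_i\mathbf{n}_i(\theta)=0$). The positive portion of such a sinusoid is either monotonic or has a unique interior maximum; in neither case can it attain an interior minimum. Hence the minimum of $r(\theta)$ on each smooth arc is attained at one of the arc's endpoints, and by compactness the global minimum over all $\theta$ is attained at a breakpoint --- which by the previous paragraph forces two distinct shape vertices onto one side of $T$, establishing (b). The main obstacle I anticipate is the clean derivation of the sinusoidal form together with the decisive observation that a positive portion of such a sinusoid admits no interior minimum; a secondary bookkeeping point is that degenerate cases --- in particular, a shape vertex at a corner of $T$, which makes two of the $v_i$ coincide --- are exactly what the parenthetical in the statement absorbs.
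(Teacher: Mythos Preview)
The paper does not prove this lemma; it is quoted (in reformulated form) from DePano and Aggarwal~\cite{da-frecp-84}, so there is no ``paper's own proof'' to compare against. Your task is therefore to supply a proof from scratch, and the one you outline is correct.

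A couple of remarks. For part~(a), your perturbation is fine; the key computation is that translating by $\varepsilon\mathbf{n}_1$ changes the signed distance to side $s_j$ by $\varepsilon\,\mathbf{n}_1\cdot\mathbf{n}_j=-\varepsilon/2$ for $j=2,3$, so those sides move outward as you claim. For part~(b), the decisive step---that $r(\theta)$ is strictly concave on each smooth arc---deserves to be stated explicitly rather than phrased as ``monotonic or has a unique interior maximum'': since $r(\theta)=A\cos\theta+B\sin\theta$ on the arc, $r''(\theta)=-r(\theta)<0$ wherever $r>0$, so the minimum on each closed arc lies at an endpoint, i.e., at a breakpoint. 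You should also note why breakpoints exist at all: $r$ is continuous and $2\pi/3$-periodic, but a nonzero pure first harmonic has period $2\pi$, so unless the shape is a single point (where $r\equiv 0$ and the statement holds trivially via the corner convention) the function cannot consist of a single smooth arc. With these two sentences added, the argument is complete and rather clean.
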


When looking for an enclosing square, there is an additional possibility~\cite{dgn-skersao-05,da-frecp-84}; the minimum may have four points in contact with four different sides. These papers compute such minima by solving for roots of polynomials, but an appealing direct construction 
of the square through four
points, which is unique when it exists, is in Problem~20 in Kovanova and
Radul's list of ``Jewish problems''~\cite{kr-jp-11}:  for points $A$--$D$ in ccw order, construct $BD'$ perpendicular and of equal length to $AC$; If $D'\ne D$, then two sides of the square must be parallel to $DD'$.

We use the following lemma, which can be found phrased slightly
differently in Das et~al.~\cite{dgn-skersao-05}:
\begin{lemma}[\cite{dgn-skersao-05}]
\label{lem:square-boundary}
  For any compact set $P\subset\R^2$ there exists a smallest enclosing
  square $S$ of $P$ that is of one of the following two combinatorial
  types:
  \begin{enumerate}[label=(\arabic{*})]
  \item\label{p:square:1} each side of $S$ contains a point from $P$;
  \item\label{p:square:2} one side of $S$ contains at least two points
    from $P$, and both the opposite side and an adjacent side of $S$
    each contain at least one point from $P$.
  \end{enumerate}
\end{lemma}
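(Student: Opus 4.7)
The plan is to parameterize enclosing squares by their orientation angle and reduce the combinatorial characterization to the behavior of the width function of $P$ at its local minima.

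First I would observe that a square contains $P$ if and only if it contains $\mathrm{conv}(P)$, and since $P$ is compact every extreme point of $\mathrm{conv}(P)$ belongs to $P$. Hence any contact of a supporting line with $\mathrm{conv}(P)$ is witnessed by at least one point of $P$, and I may work with $\mathrm{conv}(P)$ throughout. Let $w(\theta)$ denote the width of $\mathrm{conv}(P)$ in direction $u_\theta=(\cos\theta,\sin\theta)$, i.e., the distance between the two parallel supporting lines perpendicular to $u_\theta$. For any angle $\theta$, the smallest enclosing square whose sides are parallel and perpendicular to $u_\theta$ has side length exactly $s(\theta):=\max(w(\theta),w(\theta+\pi/2))$. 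The overall minimum enclosing square has side $s^*=\min_{\theta}s(\theta)$, attained at some $\theta^*$ by continuity of $w$ and compactness of $[0,\pi/2]$.

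Next I split into cases. If $w(\theta^*)=w(\theta^*+\pi/2)=s^*$, then the four supporting lines in the two orthogonal directions form a square of side $s^*$ enclosing $P$, and each of those four supporting lines touches $\mathrm{conv}(P)$ and hence contains a point of $P$; this realizes type~(1). Otherwise, WLOG $w(\theta^*)>w(\theta^*+\pi/2)$. By continuity the strict inequality persists in a neighborhood of $\theta^*$, so locally $s(\theta)=w(\theta)$ and $w$ itself is locally minimized at $\theta^*$. At this point I invoke the classical rotating-calipers fact: at a local minimum of the width function of a convex body, at least one of the two supporting lines perpendicular to the width direction meets the body in more than one point. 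Consequently one of the two sides of the enclosing square perpendicular to $u_{\theta^*}$ contains $\geq 2$ points of $P$ (an edge of $\mathrm{conv}(P)$, whose extremal endpoints lie in $P$), while the opposite side contains at least one. Finally, since $w(\theta^*+\pi/2)<s^*$ there is strict slack in the direction perpendicular to the width, so I can slide the square along $u_{\theta^*+\pi/2}$ until one of the remaining two sides picks up a contact with $P$; this realizes type~(2).

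The main obstacle is justifying the rotating-calipers claim for general compact convex sets rather than just polygons. For polygonal $P$ this is textbook, with local minima of $w$ corresponding to vertex-edge antipodal pairs. In general I would argue by contradiction using the support function $h$: if both supporting lines perpendicular to $u_{\theta^*}$ touched $\mathrm{conv}(P)$ at unique points $p_1,p_2$, then $h$ would be differentiable at $\pm u_{\theta^*}$ with gradients $p_1,p_2$, and the first-order condition $w'(\theta^*)=(p_1-p_2)\cdot u'_{\theta^*}=0$ would force $p_1-p_2$ to be parallel to $u_{\theta^*}$, so $w(\theta^*)=|p_1-p_2|$. A second-order comparison (or, alternatively, a reduction to polygons via inscribed-polygon approximations combined with continuity of the smallest enclosing square in the Hausdorff topology on compact convex sets) then shows that $w$ strictly decreases near $\theta^*$, contradicting the local minimum.
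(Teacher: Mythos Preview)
The paper does not prove this lemma; it is quoted (with acknowledged rephrasing) from Das et~al.~\cite{dgn-skersao-05}, so there is no in-paper argument to compare against. Your rotating-calipers reduction via $s(\theta)=\max\{w(\theta),w(\theta+\pi/2)\}$ is the standard route and is perfectly sound when $P$ is a polygon or a finite point set---which is the only case the paper actually uses (the folded square $F$ is a polygon).

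The gap is exactly where you flagged it, but your proposed patch does not close it. The assertion that ``at a local minimum of the width function, one of the two supporting lines meets the body in more than one point'' is \emph{false} for smooth strictly convex bodies, and your second-order contradiction does not go through: from $w'(\theta^*)=0$ and $p_1-p_2\parallel u_{\theta^*}$ you only get the first-order condition, while $w''(\theta^*)=\rho_1+\rho_2-w(\theta^*)$ (with $\rho_i$ the radii of curvature at the contact points) can certainly be nonnegative. Worse, the lemma \emph{as stated for arbitrary compact $P$} is false. Take the centrally symmetric body $K$ with support function $h(\theta)=10-\tfrac12\cos 4\theta+\tfrac12\cos 2\theta$; then $h''+h\ge 1>0$, so $K$ is smooth and strictly convex, and $w(\theta)=2h(\theta)=20-\cos 4\theta+\cos 2\theta$. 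One checks that $\{\theta:w(\theta)<20\}=(\pi/3,2\pi/3)\pmod\pi$, which is disjoint from its shift by $\pi/2$, so $s(\theta)\ge 20$ for all $\theta$, with equality only at $\theta\equiv 0\pmod{\pi/2}$. Hence every smallest enclosing square of $K$ is axis-parallel of side $20$, touching $K$ only at $(\pm 10,0)$ and possibly one of $(0,\pm 9)$; no such square is of type~\ref{p:square:1} or~\ref{p:square:2}.

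So your argument is correct and complete for polygonal $P$ (and matches what the cited source is really about), but neither the second-order comparison nor the polygon-approximation sketch can rescue the general compact case, because the general statement is not true. For the purposes of this paper you can simply restrict the hypothesis to polygonal $P$.
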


These structural characterizations support the use of rotating
calipers (see e.g.~\cite{t-sgprc-83}) to compute minimum enclosing
shapes. 
In what follows we show that the folds that define $c_1^*(S)$ 
are characterized by having multiple equal-sized enclosing shapes.

\subsubsection{Equilateral Triangle}
The example that establishes the maximum $1$-fold cover factor of an equilateral triangle is nicely symmetric.
\begin{theorem}
\label{lem:triangle-upper-bound}
The $1$-fold cover factor of an equilateral triangle, $c_1^*(\triangle)$, is $4/3$.
\end{theorem}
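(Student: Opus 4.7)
The plan is to prove both $c_1^*(\triangle)\ge 4/3$ and $c_1^*(\triangle)\le 4/3$.

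For the lower bound, I would exhibit the symmetric single fold illustrated in \fref{triangle}. The construction is to choose a fold line so that the resulting image admits two equally-sized minimum enclosing equilateral triangles, one in each of the two possible rotational orientations (upright and inverted relative to $\triangle$), both of side length $4/3$ times that of $\triangle$. After placing $\triangle$ in coordinates, I would compute the vertices of the folded image explicitly and invoke \lref{MET-characterization} to check that no enclosing equilateral triangle of smaller side can cover all of them in either orientation. Crucially, the perpendicular and parallel constraints imposed by \lref{MET-characterization} leave only finitely many candidate covers to examine, and each can be ruled out below side $4/3$ by direct computation.

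For the upper bound, I would show that every single-fold image of $\triangle$ fits in a $4/3$-scaled copy. A single fold is specified by a chord $\ell$ of the triangle, across which one of the two components is reflected. By the dihedral symmetry of $\triangle$, I may assume (up to relabeling and swapping which side is reflected) that the fold cuts off a small corner at a vertex $A$; let $s,t$ be the distances from $A$ to the two endpoints of $\ell$ along the edges incident to $A$. The folded image depends continuously on $(s,t)$, and by \lref{MET-characterization} its minimum enclosing equilateral triangle (in either orientation) has side length that is a piecewise-smooth function of $(s,t)$. I would then maximize this side length over the parameter region and show the maximum equals $4/3$, attained precisely at the symmetric fold from the lower bound.

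The main obstacle will be the case analysis in the upper bound. Depending on $(s,t)$, the reflected corner may lie entirely inside the trapezoidal remainder (yielding a pentagonal image), or protrude past the opposite edge (yielding a non-convex image or an image extending beyond $\triangle$), and the optimal enclosing equilateral triangle may be upright or inverted. Handling these sub-cases and matching them continuously at the interfaces is the technical heart of the argument. The key simplification is that at the worst-case fold both orientations of the enclosing triangle attain the optimum simultaneously, which both identifies the extremum and serves as a sanity check at each step of the computation.
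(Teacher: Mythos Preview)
Your overall plan---exhibit the symmetric fold for the lower bound and do a case analysis over all folds for the upper bound---matches the paper's strategy, and the lower-bound half is essentially what the paper does (it places the triangle with base on the $x$-axis and apex at $(0,\sqrt3)$, folds along $y=\sqrt3/3$, and checks the finitely many enclosing-triangle candidates via \lref{MET-characterization}).

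Where you diverge is in the upper-bound execution. You propose to compute, for each $(s,t)$, the \emph{minimum} enclosing equilateral triangle and then maximize that piecewise-smooth function. The paper avoids this entirely: for each fold it simply \emph{exhibits} one enclosing equilateral triangle (not necessarily minimum) and bounds its side directly. Parameterizing instead by the image $P=(p_x,p_y)$ of the reflected apex, it splits into three regimes by $p_y$: if $p_y\ge 0$ a $2/\sqrt3$-scaled copy about a base vertex works; if $-\sqrt3/3\le p_y\le 0$ the upright triangle through $P$ with horizontal base has scale $1-p_y/\sqrt3\le 4/3$; and if $p_y\le-\sqrt3/3$ the triangle with one side along the fold line has scale $(1+b/\sqrt3)\cos\theta\le 4/3$. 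This sidesteps the combinatorics of \lref{MET-characterization} in the upper bound and makes the ``main obstacle'' you anticipate largely disappear. Your remark that there are only ``two possible rotational orientations'' for the enclosing triangle is also a simplification that holds only at the symmetric extremum; for generic folds the candidate orientations are determined by which vertex pair spans a side, and there are more than two. None of this is fatal---your plan would go through with enough care---but the paper's shortcut of exhibiting a cover rather than optimizing one is the idea worth taking away.
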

\begin{proof}
Let $S$ be the triangle of side length 2 with vertices $(\pm1, 0)$ and $(0, \sqrt{3})$.
We begin by showing that any single fold can be covered by scaling to at most~$4/3$. 

By symmetry, we may assume that we fold along a line $y=mx+b$ that intersects both edges incident on $(0, \sqrt{3})$; let $P$ be the image of this vertex in the folded state $S' \in \mathcal{F}_1(S)$. 
For example, in Figure~\ref{fig:triangle}, $P = (0, -\sqrt{3}/3)$.
Consider three cases for the location of the image $P$ and the resulting minimum enclosing equilateral triangle, depicted in Figure~\ref{fig:trianglecases}.

\figbox{\includegraphics%
  {./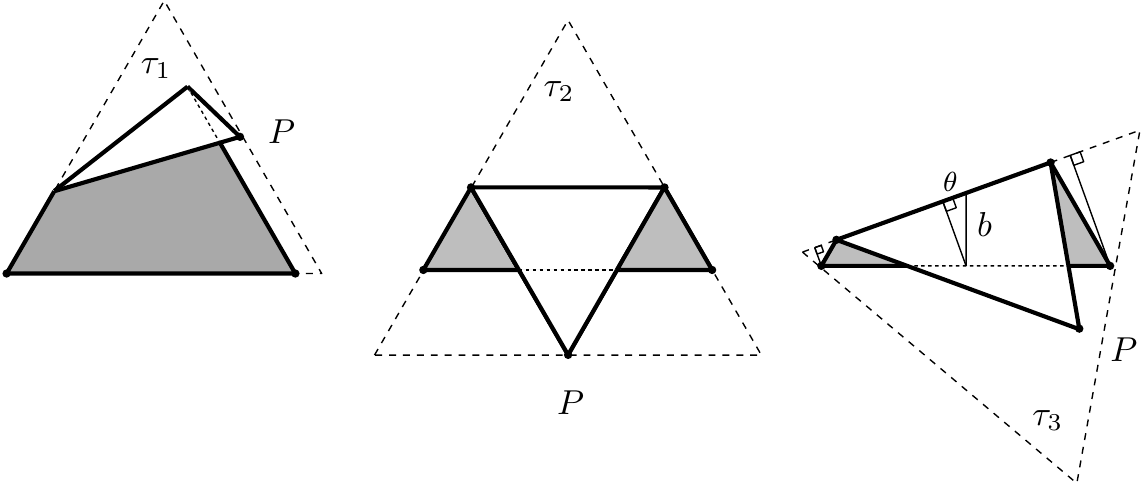}}{figure} {fig:trianglecases} {Cases for
  enclosing triangles depending on $P$. Point $P\in \tau_3$ should be
  below $P\in\tau_2$, but then the small triangles mentioned in the
  proof are hard to see.}

First, suppose that $P$ is on or above the $x$-axis.  By symmetry, we may assume that $P$ lies in the wedge formed by extending both edges of $S$ incident on vertex $(-1,0)$ to rays from~$(-1,0)$.  Because
$P$ has distance at most $2$ from  $(-1, 0)$, 
 scaling $S$ about  $(-1, 0)$ by $2/\sqrt{3}< 4/3$  creates an enclosing equilateral triangle~$\tau_1$.  

Second, suppose that the image $P=(p_x,p_y)$ has $-\sqrt3/3\le p_y\le 0$.
Consider the enclosing triangle $\tau_2$ obtained by scaling $S$
about $(0,\sqrt3)$ until the horizontal edge touches $P$. The scale
factor for this triangle is $ \frac{\sqrt3-p_y}{\sqrt3}= 1-p_y/\sqrt{3}\le 4/3$.

Finally, suppose that $P=(p_x,p_y)$ has $p_y \le-\sqrt3/3$.  
From the previous case, the scale factor for enclosing triangle $\tau_2$ is $1-p_y/\sqrt3 \ge 4/3$.
So instead consider an enclosing triangle $\tau_3$ with an edge $e$ along  the 
fold line, which we can parameterize by its $y$-intercept $b\le \sqrt3/3$ and angle from horizontal $\theta$. 
Draw perpendiculars to $e$ through vertices
$(\pm1,0)$ to form two small 30-60-90 triangles.  Edge $e$ is composed of the short sides of these triangles plus the projection of the base edge of $S$, so $e$ has length $(2+2b/\sqrt3)\cos \theta$.  Thus, the scale factor of triangle $\tau_3$ is $(1+b/\sqrt 3)\cos\theta\le 4/3\cos\theta\le 4/3$.

These cases show that $c_1^*(\triangle) \le 4/3$, and also reveal necessary conditions for equality: the fold line angle $\theta=0$ and intercept $b=\sqrt3/3$, so $P=(0,-\sqrt3/3)$.  To show that these are sufficient, we must check one more candidate for enclosing triangle.  


\begin{figure}[ht]
\centering
\includegraphics
{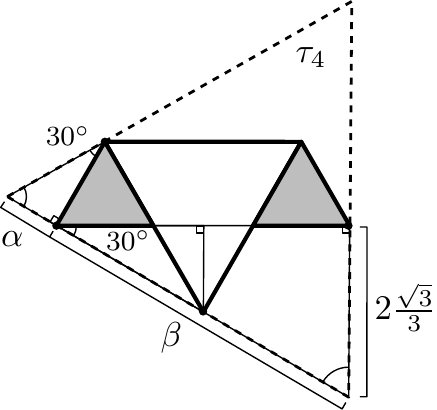}
\caption{Not a minimum enclosing triangle.}
\label{fig:t4}
\end{figure}

Consider $\tau_4$, with edge incident to $P=(0,
-\sqrt{3}/3)$ and $(-1, 0)$. The length of this edge is the sum of sides of two
30-60-90 triangles, marked $\alpha$ and $\beta$ in Figure~\ref{fig:t4}.
The scale factor $(\alpha+\beta)/2 = \sqrt{3}/9+ 2\sqrt{3}/3 =
7\sqrt{3}/9 > 4/3$. Thus, $\tau_4$ is \emph{not} a minimum enclosing
triangle, and $c_1^*(\triangle) = 4/3$, as determined by~$\tau_2$ and~$\tau_3$.
\end{proof}

\subsubsection{Square}

%

For squares, the 
fold that realizes the maximum 1-fold cover factor is astonishingly
complex, and is neither symmetric, nor rational.  For the unit square
$[0,1]^2$, the vertex $(0, 1)$ folds to a location whose $y$
coordinate is the root of a degree twelve polynomial:
$\Phi(x)=40x^{12}+508x^{11}+1071x^{10}+930x^9-265x^8-1464x^7-1450x^6-524x^5+58x^4+76x^3+3x^2-6x-1$.
This polynomial will arise because the optimal fold has \emph{three}
distinct minimum enclosing squares.  Let $\varrho$ denote the largest
(and only positive) real root of $\Phi(x)$, which is approximately
$1.105224$.

%
Let $S=\{(x,y):0\le x,y\le 1\}$ denote the axis-parallel unit square
and consider some $F\in\mathcal{F}_1(S)$ such that $F\ne S$. Note that
$F$ is a simple polygon that is uniquely determined (up to symmetry)
by a fold line $\ell$. 
\begin{prop}\label{prop:square-folded-polygon}
  The polygon $F$ can be covered by $S$, unless fold line $\ell$ intersects $S$
  in the relative interior of two opposite sides.
\end{prop}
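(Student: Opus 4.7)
The plan is to classify the intersection $\ell\cap\partial S$ and, in every case compatible with the hypothesis, exhibit a unit-sized copy of $S$ that contains~$F$. Since $\ell$ must actually split $S$ into two components, it meets $\partial S$ in exactly two points; ruling out the excluded configuration (two interior points of opposite sides), the remaining possibilities are: (a) two interior points of two \emph{adjacent} sides of $S$; (b) one vertex of $S$ together with an interior point of a side not incident to that vertex; (c) two vertices of $S$, i.e., $\ell$ is a diagonal. In each of (a)--(c), one component of $S\setminus\ell$ is a triangle $T$ having one vertex at a corner $v$ of $S$, with its other two vertices lying on the two sides of $S$ incident to~$v$.

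The crux is then a single reflection calculation. Place $v$ at the origin with the two incident sides of $S$ along the positive $x$- and $y$-axes, so the cut points of $\ell$ are $(a,0)$ and $(0,b)$ for some $a,b\in(0,1]$ and the fold line has equation $bx+ay=ab$. A direct computation yields the reflection of $v=(0,0)$ across this line:
\[
V \;=\; \Bigl(\tfrac{2ab^2}{a^2+b^2},\; \tfrac{2a^2b}{a^2+b^2}\Bigr).
\]
The inequality $2ab\le a^2+b^2$ (equivalent to $(a-b)^2\ge 0$) gives $V_x\le b\le 1$ and $V_y\le a\le 1$, so $V\in S$. Since the reflected triangle $T'$ has vertices $(a,0)$, $(0,b)$, and $V$, all in the unit square, convexity forces $T'\subseteq S$. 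Hence reflecting $T$ produces a folded shape $(S\setminus T)\cup T' \subseteq S$, which is covered by $S$ itself.

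The only remaining subtlety is which component is actually reflected. If the complement $B=S\setminus T$ is reflected instead of $T$, the resulting shape $T\cup B'$ is the image of $(S\setminus T)\cup T'$ under reflection in $\ell$, so by Definition~\ref{def:cover} it is still covered by a (reflected) congruent copy of $S$. The main obstacle I foresee is not any of the individual geometric steps but the bookkeeping of the case analysis: one must verify that (a)--(c) truly exhaust the configurations allowed by the hypothesis and that in each case the cut-off triangle $T$ really has its apex at a vertex of $S$ with both $\ell$-cut points on the incident sides. Once that reduction is established, the single AM--GM estimate dispatches all cases uniformly.
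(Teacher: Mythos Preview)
Your proof is correct and follows the same approach as the paper: identify the triangle cut off by $\ell$ and show that its reflection across $\ell$ lands back inside $S$. The paper's proof simply asserts this containment in one sentence, whereas you supply the explicit coordinate computation and the AM--GM estimate, and you are also more careful than the paper in explicitly treating the degenerate cases where $\ell$ passes through one or two vertices of $S$.
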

\begin{proof}
  If $\ell$ does not intersect the interior of $S$ then $F\cong
  S$. Otherwise $\ell$ intersects $\partial S$ in exactly two
  points. If these points lie on adjacent sides of $S$, then folding along $\ell$ reflects  the triangle formed by these sides and $\ell$ inside the portion of the square $S$ on the opposite side of $\ell$. Therefore, $F$ can be covered by $S$.
\end{proof}
We are interested in a fold line $\ell$ that maximizes the smallest
enclosing square of $F$. Using symmetry with 
Proposition~\ref{prop:square-folded-polygon}, we  can assume:
\begin{enumerate}[label=(\arabic{*})]
\item the line $\ell$ intersects both horizontal sides of $S$ (else
  rotate by $90^\circ$);
\item the slope of $\ell$ is negative (else reflect vertically);
\item $\ell$ intersects the top side of $S$ left of the midpoint~$(1/2,1)$ (else
  rotate by $180^\circ$).
\end{enumerate}
If we imagine $F$ as the result of folding the part of $S$ to the left
of $\ell$ over to the right, then we can parameterize $\ell$ by the
image $P=(p_x,p_y)$ of the top left corner $(0,1)$ of $S$ under this
fold. Under the above assumptions,  a line $\ell$ that passes
(almost) through $(1/2,1)$ and $(1,0)$ would maximize 
$p_y$. Therefore $0<p_x<4/5$ and so
$1<p_y<\sqrt{2p_x-{p_x}^2+1}<7/5$.

Denote the  two points of intersection between $\ell$ and $\partial
S$ by $B=(b_x,0)$ and $T=(t_x,1)$ and denote the image of the
bottom-left corner $(0,0)$ of $S$ under the fold across $\ell$ by
$Q=(q_x,q_y)$. If $q_x>1$, then the convex hull $\mathcal{CH}(F)$ of
$F$ is the hexagon $B,(1,0),Q,(1,1),P,T$, else $Q$ does not appear on
$\partial(\mathcal{CH}(F))$ and it is only a pentagon. Note that in any
case the width of $F$ in the $y$-direction is greater than one, whereas
the width in the $x$-direction is less than one.

For a given $P=(p_x,p_y)$, we have
\begin{align*}
 \ell : y&=-\frac{p_x}{p_y-1}x+\frac{{p_x}^2+{p_y}^2-1}{2(p_y-1)},\\
   T &= \Bigl(\frac{{p_x}^2+(p_y-1)^2}{2p_x},1\Bigr),\\
  B &= \Bigl(\frac{{p_x}^2+{p_y}^2-1}{2p_x},0\Bigr),\,\mathrm{and}\\
  Q &= \Bigl(\frac{p_x({p_x}^2+{p_y}^2-1)}{{p_x}^2+(p_y-1)^2},\frac{({p_x}^2+{p_y}^2-1)(p_y-1)}{{p_x}^2+(p_y-1)^2}\Bigr).
\end{align*}

What does a smallest enclosing square $\sigma$ of $F$ look like?  For
the upper bound on the cover factor we consider three enclosing
squares (\figurename~\ref{fig:square-lower-bound-candidates}).
\begin{description}
\item[$\sigma_1$] is the smallest axis-parallel enclosing square, which has points $B$ and $(1,0)$ on the bottom side, $P$ on the top, $T$ on the left, and no point on the right.
\item[$\sigma_2$] has 
  points $P$ and $(1,1)$ on one side, $B$ on the opposite side, and $T$ on a third side. 
\item[$\sigma_3$]  has points $B$, $(1,0)$, $(1,1)$,
  and $T$ appearing in this order, each on a different side of
  $\sigma_3$.
\end{description}
\begin{figure}[ht]
\centering
\includegraphics
{./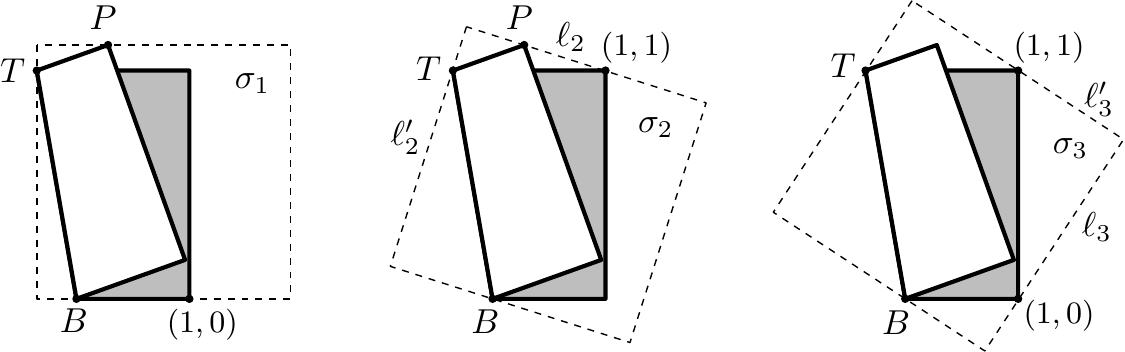}
\caption{Three minimum enclosing squares for $F$.}
\label{fig:square-lower-bound-candidates}
\end{figure}
\begin{theorem}\label{lem:square-upper-bound}
  The $1$-fold cover factor of a square, $c_1^*(\square)$, is
  $\varrho\approx 1.105224$, where $\varrho$ is the largest real root
  of the degree twelve polynomial
  $\Phi(x)=40x^{12}+508x^{11}+1071x^{10}+930x^9-265x^8-1464x^7-1450x^6-524x^5+58x^4+76x^3+3x^2-6x-1$.
\end{theorem}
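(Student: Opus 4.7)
The plan is to prove upper and lower bounds separately. For the upper bound $c_1^*(\square)\le\varrho$, I would compute explicit side-length formulas $s_1(p_x,p_y)$, $s_2(p_x,p_y)$, $s_3(p_x,p_y)$ for the three candidate squares $\sigma_1,\sigma_2,\sigma_3$ using the closed forms already listed for $B$, $T$, $Q$. The square $\sigma_1$ is axis-parallel, and since the vertical extent of $F$ is $p_y>1$ while the horizontal extent is $1$, one gets $s_1=p_y$. For $\sigma_2$ and $\sigma_3$, the sides are determined by the signed distances between pairs of parallel supporting lines through the listed contact points, giving rational functions of $(p_x,p_y)$. The first task is then to check that each $\sigma_i$ is \emph{valid}: i.e., that the prescribed contacts actually witness an enclosing square of $F$ (vertices on the correct sides, correct side incidences as in \lref{square-boundary}), throughout the region of fold parameters $0<p_x<4/5$, $1<p_y<\sqrt{2p_x-p_x^2+1}$.

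Next I would show that for every valid fold the minimum of $s_1,s_2,s_3$ is at most $\varrho$ by bounding the maximum of $\min(s_1,s_2,s_3)$ over parameter space. A standard envelope argument handles this: at any interior maximizer the active constraints must all coincide, for otherwise perturbing $P$ in a direction that decreases the unique minimum $s_i$ (while keeping the other two larger) would strictly decrease $\min(s_1,s_2,s_3)$. One then checks that two-wise equalities $s_i=s_j$ on the boundary of the parameter region are dominated by the interior critical point. So at the optimal $P^*$ one has $s_1(P^*)=s_2(P^*)=s_3(P^*)$. Substituting and eliminating $p_x$ via resultants from the two equations $s_1=s_2$ and $s_2=s_3$ produces a univariate polynomial in $p_y$ whose unique real root greater than one is $\varrho$; the minimal polynomial of this root is exactly $\Phi(x)$ (up to trivial factors coming from extraneous roots in the resultant that can be discarded on geometric grounds, e.g.\ $p_y<1$ or degenerate folds). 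This gives $c_1^*(\square)\le\varrho$.

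For the lower bound $c_1^*(\square)\ge\varrho$, I would take the witness fold to be the single fold determined by $P^*$ from the preceding analysis and show that the smallest enclosing square of $F(P^*)$ has side exactly $\varrho$. By \lref{square-boundary}, any minimum enclosing square of $F(P^*)$ realizes one of two combinatorial contact types, and since $F(P^*)$ has at most six extreme points, namely $B$, $(1,0)$, $Q$, $(1,1)$, $P$, $T$, the number of combinatorially possible minimum enclosing squares is finite. Each candidate is the unique square determined by its four incidences (using the ``Jewish problems'' construction for the type (1) case, and a straightforward two-parameter optimization for type (2)). I would enumerate all such candidates, observe that $\sigma_1,\sigma_2,\sigma_3$ all appear with common side $\varrho$ by construction, and verify by direct computation that every other candidate either has side $\ge\varrho$ or fails to enclose $F(P^*)$. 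Thus $\varrho$ is simultaneously the min-enclosing-square size and the lower bound.

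The main obstacle will be the algebra, specifically reducing the system $s_1=s_2=s_3$ to the degree-twelve polynomial $\Phi(x)$ and confirming that spurious factors produced by resultant elimination correspond to geometrically infeasible configurations. A secondary difficulty is the completeness of the case analysis in the lower bound: \lref{square-boundary} prunes most configurations, but one still has to rule out combinatorial types in which, for example, the oblique side $\overline{PT}$ of $F(P^*)$ or the oblique side $\overline{BQ}$ contributes a contact point that is not a vertex of the convex hull --- this requires checking that such contacts never give a strictly smaller enclosing square than $\varrho$.
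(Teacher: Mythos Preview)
Your approach is essentially the paper's: both use the three candidate enclosing squares $\sigma_1,\sigma_2,\sigma_3$, locate the critical fold where $|\sigma_1|=|\sigma_2|=|\sigma_3|=\varrho$, and appeal to \lref{square-boundary} to verify that these are precisely the minimum enclosing squares at that fold for the lower bound. The only real difference is how the upper bound $\min_i|\sigma_i|\le\varrho$ is established: the paper does a direct region-by-region check (use $\sigma_1$ when $p_y\le\varrho$; elsewhere use $\sigma_2$, except in a small region $\mathcal{R}$ where $\sigma_3$ takes over), whereas you try a first-order optimality (``envelope'') argument.

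That envelope argument as written has a gap. First, your perturbation direction is backwards: to contradict maximality when a single $s_i$ is the unique minimum, you want a perturbation that \emph{increases} $s_i$, not decreases it. Second, and more substantively, in a two-dimensional parameter domain the interior maximum of $\min(s_1,s_2,s_3)$ only forces \emph{two} of the $s_i$ to coincide (with opposing gradients along the equality curve); triple coincidence is not automatic. To make your route work you must also rule out interior two-wise-equality maxima, which is precisely the content of the paper's region analysis. Finally, be aware that $\sigma_2$ is not a valid enclosing square over the whole parameter domain: the paper restricts it to $p_y\le\frac{1}{2}\bigl(1+\sqrt{4p_x-4p_x^2+1}\bigr)$ and falls back on $\sigma_3$ elsewhere, so the region on which you optimize $\min(s_1,s_2,s_3)$ has internal validity boundaries that your envelope argument must also track.
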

\begin{proof}
  The effort goes into showing that, for each folded shape $F$, one of
  the three enclosing squares $\sigma_i$, $i\in\{1,2,3\}$, as defined
  above, has side length at most~$\varrho$.

  Denote the side length of a square $\sigma$ by $|\sigma|$.  For a
  start it is easy to see that $|\sigma_1|=p_y<7/5$, which provides a
  first upper bound.

  For $\sigma_2$ we have to consider the distance
  $\mathrm{d}(B,\ell_2)$, where $\ell_2$ is the line through $P$ and
  $(1,1)$ and the distances $\mathrm{d}((1,0),\ell_2')$ and
  $\mathrm{d}(Q,\ell_2')$, where $\ell_2'$ is the line orthogonal to
  $\ell_2$ through $T$. Noting that
  \[
  \mathrm{d}(B,\ell_2)=\frac{\left|{p_x}^2p_y+{p_y}^3+{p_x}^2-2p_xp_y-{p_y}^2-p_y+1\right|}{2p_x\sqrt{(p_x-1)^2+(p_y-1)^2}}
  \]
  \[
  \mathrm{d}({(1,0)},\ell_2')=\frac{\left|{p_y}^2p_x+{p_x}^3-{p_y}^2-3{p_x}^2+2p_y+p_x-1\right|}{2p_x\sqrt{(p_x-1)^2+(p_y-1)^2}},
  \]
  it can be checked that the former dominates the latter for
  $p_y\le\frac{1}{2}(1+\sqrt{4p_x-4{p_x}^2+1})$ and that
  $\mathrm{d}((1,0),\ell_2')>p_y$ for
  $\frac{1}{2}(1+\sqrt{4p_x-4{p_x}^2+1})<p_y<\sqrt{2p_x-{p_x}^2+1}$
  (and so $|\sigma_1| \leq |\sigma_2|$ in such a case). Exactly the
  same holds if $\mathrm{d}((1,0),\ell_2')$ is replaced by
  \[
  \mathrm{d}(Q,\ell_2')=\frac{\left|N_1\right|}{2p_x(1+(p_x-p_y)^2)\sqrt{(p_x-1)^2+(p_y-1)^2}},
  \]
  where
  $N_1={p_x}^5+2{p_x}^3{p_y}^2+p_x{p_y}^4-{p_x}^4-2{p_x}^3p_y-2p_x{p_y}^3+{p_y}^4-4{p_x}^2p_y-4{p_y}^3+4{p_x}^2+2p_xp_y+6{p_y}^2-p_x-4p_y+1$. This verifies that $\sigma_2$ is enclosing, with side length $|\sigma_2|=\mathrm{d}(B,\ell_2)$.

  For $\sigma_3$ we consider a line $\ell_3:y=m(x-1)$ through $(1,0)$,
  for some $m>0$ and the orthogonal line $\ell_3':y=(m+1-x)/m$ through
  $(1,1)$. If $\sigma_3$ is a smallest enclosing square, then
  $\mathrm{d}(T,\ell_3)=\mathrm{d}(B,\ell_3')$. For our range of
  parameters, the only solution is 
  \[
  m=\frac{{p_x}^2+{p_y}^2-1}{{p_x}^2+(p_y-1)^2},
  \]
  which yields 
  \[
  |\sigma_3|=\mathrm{d}(T,\ell_3)=\frac{\sqrt{2}|N_2|}{4p_x\sqrt{D_2}},
  \] 
  where
  $N_2={p_x}^4+2{p_x}^2{p_y}^2+{p_y}^4-4{p_x}^3-2{p_x}^2{p_y}-4{p_x}{p_y}^2-2{p_y}^3+4{p_x}p_y+2p_y-1$
  and
  $D_2={p_x}^4+2{p_x}^2{p_y}^2+{p_y}^4-2{p_x}^2p_y-2{p_y}^3+2{p_y}^2-2p_y+1$.

  Because we choose the smallest square among $\sigma_1$, $\sigma_2$,
  and $\sigma_3$, the claim certainly holds for $|\sigma_1|=p_y\le
  \varrho$.

  It can be checked that $|\sigma_2|\le \varrho$, for all $P$ with
  $\varrho<p_y<\sqrt{2p_x-{p_x}^2+1}$, except for a small region
  $\mathcal{R}$. This region $\mathcal{R}$ is bounded from below by
  the line $y=\varrho$ and from above by the curve $\gamma:|\sigma_2|=\varrho$
  (the branch of this curve that lies in $\{(x,y):\varrho\le
  y<\frac{1}{2}(1+\sqrt{4x-4x^2+1})\}$). The curve $\gamma$
  intersects the line $y=\varrho$ at two points, whose $x$-coordinates are
  approximately $0.67969$ and $0.77126$, respectively. The more
  interesting of these two is the first point of intersection, which
  can be described exactly as the smallest positive real root $x_\varrho$ of
  the polynomial
  $40x^{12}-116x^{11}-1045x^{10}+4756x^9-10244x^8+7260x^7
  -8392x^6-184x^5+620x^4-160x^3+1088x^2-192x+256$.  For the fold
  defined by $P=(x_\varrho,\varrho)$ we have $|\sigma_1|=|\sigma_2|=|\sigma_3|=\varrho$,
  while for all other points in $\mathcal{R}$ the corresponding value
  for $|\sigma_3|$ is strictly less than $\varrho$.

  It can also be checked that $|\sigma_3|<\varrho$, for any $P$ with $p_y>\varrho$
  and
  $\frac{1}{2}(1+\sqrt{4p_x-4{p_x}^2+1})<p_y<\sqrt{2p_x-{p_x}^2+1}$
  (above we committed to using $\sigma_2$ only if $p_y\le
  \frac{1}{2}(1+\sqrt{4p_x-4{p_x}^2+1})$).

  Altogether it follows that $\min\{|\sigma_i|:i\in\{1,2,3\}\}\le
  \varrho\approx 1.105224446$, as claimed. 

  For the other direction, consider the fold defined by
  $P=(x_\varrho,\varrho)$, for which
  $|\sigma_1|=|\sigma_2|=|\sigma_3|=\varrho$. Using
  \lref{square-boundary}, it is easy to check that
  $\sigma_1,\sigma_2,\sigma_3$ are exactly the minimum enclosing
  squares for $F$.
\end{proof}
\subsection{Polygons}\label{app:pgon}
In contrast to disks, polygons can always be made larger with a single
fold; that is, $c_1^*(P)>1$ for all shapes  bounded by a finite
cyclic sequence of vertices and edges with no self-intersections.

\begin{lemma}
For every plane polygon $P$, the 1-fold cover factor $c_1^*(P)>1$.
\end{lemma}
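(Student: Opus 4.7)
The plan is to exhibit, for any polygon $P$, a specific single fold whose image $F$ cannot be covered by any congruent copy of $P$. The construction is anchored at a convex vertex of $P$ with the minimum interior angle.

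First, I would pick a convex vertex $v$ of $P$ with the smallest interior angle $\alpha<\pi$; such a vertex exists in every simple polygon because every simple polygon has at least three convex vertices (the vertices of $P$ lying on its convex hull). Let $e_1$ and $e_2$ denote the two edges of $P$ incident to $v$.

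Next, I would consider a fold along a line $\ell$ through $v$ making an angle $\theta\in(\alpha/2,\alpha)$ with $e_2$ inside the wedge at $v$, reflecting across $\ell$ the component of $P\setminus\ell$ whose boundary contains a segment of $e_1$. A straightforward local analysis at $v$ shows that the folded shape $F$ has a convex vertex at $v$ of interior angle exactly $\theta$, bounded by the original edge $e_2$ on one side and by the fold seam $\ell\cap P$ on the other; the reflected image of $e_1$ lies strictly inside this wedge and is therefore an interior segment of $F$ near $v$. In particular, $\theta<\alpha$ is strictly smaller than the minimum vertex angle of $P$ itself, and both edges at this new vertex can be made long by choosing $\ell$ so that $\ell\cap P$ exits $\partial P$ far from $v$.

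The main step is then to argue that for a suitable choice of $\theta$, $F$ cannot be contained in any rigid copy $g(P)$. The argument would decompose by where $v^*:=g^{-1}(v)$ lies in $P$. If $v^*$ is in the interior of $P$ or on an open edge of $\partial P$, the two long edges of $F$ at $v$ would have to extend into $g(P)$ within a narrow wedge of angle $\theta$, and comparing their lengths with the distance from $v^*$ to the relevant part of $\partial P$ yields a contradiction. If instead $v^*$ is a vertex of $P$ with interior angle $\beta\ge\alpha>\theta$, the wedge at $v$ in $F$ fits strictly inside the wedge of $g(P)$ at $v^*$ with room to spare; this leftover room must then be consumed by further features of $F$---in particular the new vertex created at $w=\ell\cap\partial P$ by the fold, and the reflected images of the other vertices of $P$ inside $F$. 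Tracking these additional features is what rules out each candidate $v^*$.

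The hard part will be the vertex-to-vertex subcase: a conflict must be derived for every possible choice of $v^*$ and every orientation of $g$. I expect this to require a careful, problem-specific choice of $\theta$ depending on the geometry of $P$ away from $v$, together with a detailed bookkeeping argument for the positions of the reflected features; a compactness or continuity argument in the parameter $\theta$ may be needed as a final step to handle the most symmetric or degenerate polygons uniformly.
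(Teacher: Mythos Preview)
Your plan has a concrete failure already at the construction step, before the ``hard part'' you flag. Take $P=[0,1]^2$, $v=(0,0)$, $e_2$ along the $x$-axis, $e_1$ along the $y$-axis, and $\alpha=\pi/2$. For any $\theta\in(\pi/4,\pi/2)$ your line $\ell$ exits through the top side at $(\cot\theta,1)$, and the component of $P\setminus\ell$ containing $e_1$ is the triangle with vertices $(0,0),(0,1),(\cot\theta,1)$. Reflecting across $\ell$ sends $(0,1)$ to $(\sin 2\theta,-\cos 2\theta)$, which lies in $(0,1]\times(0,1]$ and strictly on the $e_2$-side of $\ell$; hence the entire reflected triangle sits inside the unfolded convex quadrilateral $(0,0),(1,0),(1,1),(\cot\theta,1)$. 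Thus $F$ equals that quadrilateral and is a subset of $P$ itself. So for the square your fold is covered by $P$ with $g=\mathrm{id}$, and no ``further features of $F$'' can possibly produce a conflict in the vertex-to-vertex case. The same phenomenon---a fold line through a vertex with $\theta>\alpha/2$ tucking the smaller flap inside the larger piece---occurs for many convex polygons; compare the paper's Proposition~\ref{prop:square-folded-polygon}. A second, smaller issue: you assert ``both edges at this new vertex can be made long,'' but one of them is $e_2$, whose length is fixed by $P$; for a regular $n$-gon with large $n$ that edge is short, so your interior/open-edge subcase does not go through as stated either.

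The paper's proof takes a completely different route: it anchors the argument not at a small-angle vertex but at the diametral structure of $P$. One identifies ``max arrows'' (a diameter together with short stubs of the four incident edges, lexicographically extremal in certain angle data) and exploits that any undestroyed copy of such a structure in the folded shape must be covered by another max arrow. The fold lines used are near a diametral vertex but \emph{not} through it, and are tailored to three cases on the arrow's angles (producing ``blunt arrows,'' ``darts,'' and ``bent arrows''); in each case the fold perturbs the structure just enough that no max arrow of $P$ can cover it. This global rigidity of diametral pairs is what makes the finiteness argument work, in contrast to the purely local minimum-angle feature you use, which $P$ can absorb.
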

\begin{proof}
We look for finite sets of structures in $P$ that, if not destroyed by
folding, can only be covered by members of that set.  For example, the
set of diameters in a polygon is finite because the maximum distance
$D$ is realized only by pairs of vertices, and any diametral pair that
remains at distance $D$ in the folded state~$F$ must be covered by a
diameter of $P$, possibly itself.

We proceed through increasingly elaborate structures as we consider
different cases for the polygon~$P$.
To begin simply, suppose that in $P$ there exist vertices that
participate in two or more diametral pairs.  (E.g., for odd $n$,
every vertex of a regular $n$-gon.)  Choose as our structure two
diametral pairs, $pq$ and $qr$, that minimize $\theta=\angle pqr$.
Fold along a line trisecting $\theta$, reflecting $qr$ to create~$qr'$
in the folded shape~$F$.  This modified structure has angle 
$\angle pqr'=\theta/3$ between two diameters; by
minimality of $\theta$, it cannot be covered by $P$.

Normalize so that the diameter length $D=1$.  Choose $0<\epsilon<1/4$
so all edge lengths in $P$ are at least $2\epsilon$ and the distance
between any non-diametral pair of vertices is at most $1-\epsilon$.
Define a {\it double arrow} to be a diameter $pq$, plus
$\epsilon$-length segments, $ap$, $bp$, $cq$ and $dq$, of the polygon
edges incident on $p$ and $q$, respectively.  Denote the positive angles these four edge
form to $pq$ by $\alpha$, $\beta$, $\gamma$, and $\delta$,
respectively.  Assume, by relabeling and reflecting, if necessary,
that $\alpha+\beta \ge \gamma+\delta$ and $\gamma\ge \delta$, as
illustrated in \fref{pgonfolds}.  Note that angle $\angle paq$ is
obtuse because $a$ is at most the midpoint of its edge and the distance
$pq$ is at least the distance to the other endpoint.  Similarly, the
other three edges also form obtuse triangles with $pq$.

\begin{figure*}[ht] 
\centering
\includegraphics[width=.9\textwidth]{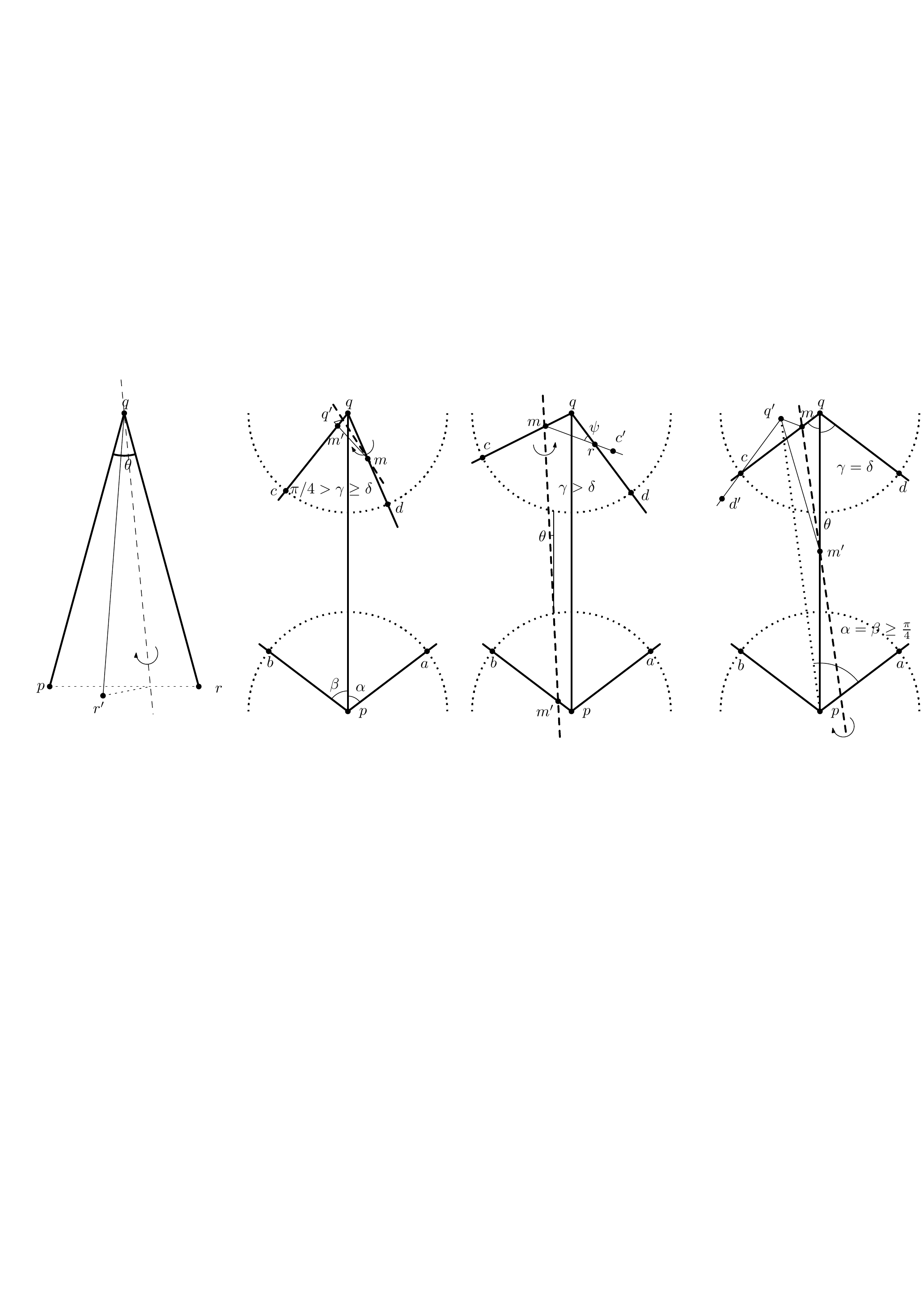}
\caption{Structures in polygons and the folds used to show $c_1^*(P)>1$.}
\label{fig:pgonfolds}
\end{figure*}

From here on, we restrict attention to the set of {\it max arrows}
that lexicographically maximize the 4-tuple
$(\alpha+\beta,\gamma+\delta, \gamma, \max\{\alpha,\beta\})$.  (That
is, when comparing two 4-tuples, compare the pair of $j$th elements
if and only if for all $k<j$ the $k$th elements are equal.)  Not only must max
arrows be covered by max arrows, but we can find other structures
within them.

{\it Case} $\gamma+\delta<\pi/2$: For this case we define a set of blunt
  arrow structures by removing the sharper tip of the max arrow.  Let $m$ be
the midpoint of $dq$. Drop a perpendicular from $m$ to $cq$ at $m'$. A
{\it blunt arrow} is any isometric copy of the heptagon $apbcm'md$. 
Because $mp>1-\epsilon$, a blunt arrow can  be covered only near a
diameter, and because it includes portions of all four edges, these must be 
covered by a max arrow (possibly with reflections and rotations if enough 
of the angles have equal values.)  Thus, a blunt arrow that is not destroyed by folding
must be covered by one of a finite set of blunt arrows.  

Folding the arrow tip $q$ along the bisector of $\angle m'mq$ does not
destroy the blunt arrow. However, by making $mq'$ extend $mm'$, it does put
$q'$ outside any copy of $P$ that covers this blunt arrow.

{\it Case} $\gamma+\delta\ge \pi/2$ and either $\gamma>\delta$ or
$\alpha\ne\beta$: For this case we define a narrower dart structure.
We have assumed that $\gamma>\delta$. Because we will use the fact that
$\alpha+\beta\ge \gamma+\delta$ only to conclude that $\alpha+\beta\ge
\pi/2$, the cases for $\alpha>\beta$ and $\beta>\alpha$ can be handled
in the same way.

Let $q'=2p-q$ be the point obtained by reflecting $q$ at $p$. Draw a line 
through $q'$ that forms an angle $\theta
=(\gamma-\delta)\frac{\epsilon}{6}$ with $\overline{pq}$, and
intersects $bp$ at $m'$ and $cq$ at $m$. (See the third subfigure of \fref{fig:pgonfolds}.)
A {\it dart} is a copy of
the hexagon $apm'mqd$. Because a dart has a diameter segment and
portions of all four incident edges, it should be clear that darts
that survive folding to appear in $F$ must be covered by darts in $P$.

The angle $\theta$ is chosen so that if we fold along the line $\overline{mm'}$, the
reflected $mc'$ intersects $qd$ and extends outside the dart and the polygon.
This claim can be proven as follows. For 
$0<\theta<\pi/4$, we can bound $\frac{2\sqrt2 }{\pi}\theta< \sin\theta<\theta$.  
Let $r=mc'\cap qd$
and apply the law of sines to $\triangle mrq$ to show that $r$ is on both segments $mc'$ and $qd$: 
\begin{align*}
 \angle mrq &
 =(\gamma-\delta)+2\theta>(\gamma-\delta).\\ 
 mq&=\frac{2\sin\theta}{\sin(\gamma+\theta)}<2\sqrt 2\theta
 =\epsilon\frac{\sqrt 2(\gamma-\delta)}{3}.
\end{align*}

It follows that
\[
 qr<mr+mq=mq\Bigl(\frac{\sin(\gamma+\delta)}{\sin (\angle mrq)}+1\Bigr)
 <\epsilon\frac{\sqrt 2(\gamma-\delta)}{3}\Bigl(\frac{\pi}{2\sqrt2(\gamma-\delta)}+1\Bigr)
 <
 \epsilon\Bigl(\frac{\pi}{6}+\frac{\sqrt2}{3}\frac{\pi}{4}\Bigr)<\epsilon.
\]

\smallbreak{\it Case} $\alpha=\beta\ge\pi/4$ and $\gamma=\delta\ge\pi/4$: 
Because of $\alpha+\beta\geq\gamma+\delta$, we have $\alpha\ge \gamma$. 

For this case we define a family of bent arrows, parametrized by angle
$\theta$, which is the angle the fold line $\ell_\theta$ makes with
$pq$.  We show that for sufficiently small $\theta$, a triangle with side
on $pa$ and vertex $q'$ cannot be covered by a copy of $P$.

  To complete the
specification of the fold line $\ell_\theta$, choose its point of
intersection $m'=\ell_{\theta}\cap \overline {pq}$ at distance $h=\epsilon
\cos\theta/\cos(\gamma-\theta)$ from $q$.  This makes the folded image
$q'd'$ pass through~$c$, placing $q'$ outside of the arrow.  Folding
in the other direction, we see that points in the neighborhood of $c'$
go outside of $qd$.  Thus, neither $P$, nor the reflection of $P$ can
cover the folded shape $F$ by aligning $pq$ to a diameter. On the
other hand, $pq$ must remain in the neighborhood of a diameter, because
$|pq'|\ge D-\epsilon$.  (Recall that diameter length $D=1$.)

If $\epsilon > \cos\gamma$, then we can
choose $\theta>0$ sufficiently small that $h>1$ so folding misses the diameter
$pq$ and forms a dart, as in the previous case.  So assume that
the fold line $\ell_{\theta}$ crosses the segment $pq$.

If we calculate the position of $q'=m'+(h,2\theta)$, we observe that  the
derivative with respect to $\theta$ is perpendicular to $pq$: 
$$\frac{d}{d\theta} q'=\frac{2\epsilon}{\cos\gamma}(q-p)^\bot.$$ 
 
Let $e$ be a portion of the polygon edge giving $pa$, truncated to
length $2\epsilon$.  (Thus, $a$ is the midpoint of~$e$.) We consider
covering the triangle with side $e$ and vertex~$q'$.  Initially,
assume that we try to do so by rotating an arrow with maximal angle~$\alpha$.  
Let $C$ be the circle that is tangent to $pb$ at $p$ and has $e$ as a chord. Let $o\ne p$ be the other point intersection $C\cap pq$.

Now, fix edge $e$ in the plane and rotate polygon $P$, keeping $e$ covered by keeping the endpoints of $e$ in contact
with the wedge $bpa$ of angle $2\alpha$.  Point $\hat p$ moves on an
arc of circle $C$ and $\hat q$ moves so that the unit segment $\hat p\hat q$
always passes through $o$.  We can represent this curve (a portion of
a lima\c{c}on) in polar form about the origin $o$, with $\psi$ being
the angle from $oq$, and compute its derivative, which is not
perpendicular to $pq$:
\begin{align*}
\hat q&=(1-2\epsilon\frac{\sin(\psi+\alpha)}{\sin(2\alpha)},\psi)\\
\frac{d}{d\psi} \hat
q&=\frac{\epsilon}{\cos\alpha}(q-p)^\bot-\frac{\epsilon}{\sin\alpha}(q-p).
\end{align*}

This means that we can choose a small $\theta>0$ so that, even in the
widest arrow of angle $2\alpha$, the edge $e$ forces $q'$  outside
the curve traced by $\hat q$.  For narrower arrows, $q'$ is
forced even further outside.  Thus, $P$ cannot cover $F$ without scaling. 
 
This completes the proof that $c_1^*(P)>1$ for any polygon~$P$.
\end{proof}


\subsection{Disks and Lima\c{c}ons}\label{sec:disk1}

\begin{figure*}[ht] 
\centering
\includegraphics[width=0.9\textwidth]{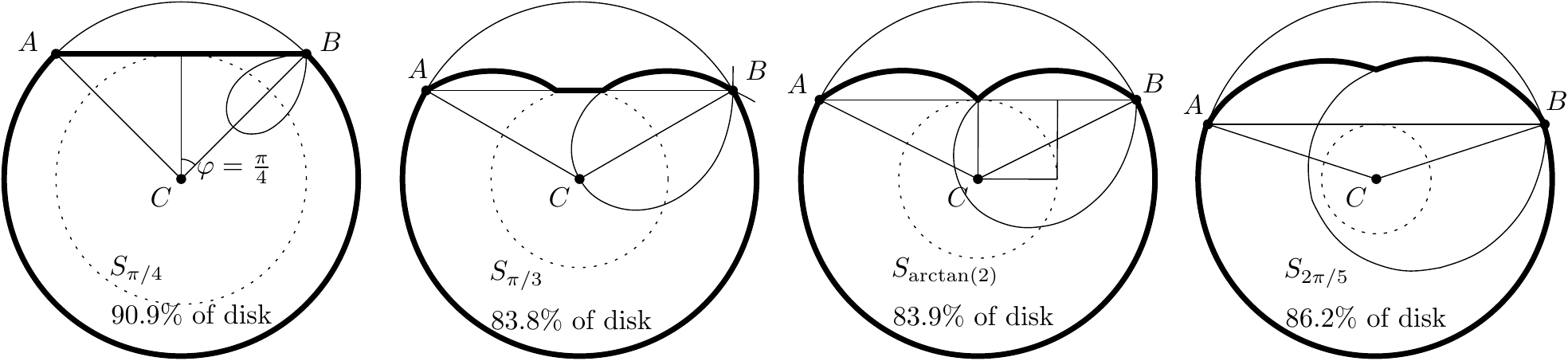}
\caption{Example shapes $S_\varphi$ with $c_1^*(S_\varphi)=1$.  
The dotted circle of radius $\cos \varphi$ gives the drawn loops of
lima\c{c}ons in two ways: as reflections of $B$ across tangents that
separate $B$ and $C$ and as the envelope of circles through $B$ that
are centered on the arc of the dotted circle that lies between the tangents through~$B$. }
\label{fig:disk}
\end{figure*}

A circle sector, folded toward the center, never leaves its circle. This makes it easy to observe that $c_1^*(\bigcirc)=1$.  

We extend this observation to create a family of shapes, parameterized by an angle $\varphi\in
(0,\pi/2)$, that has 1-fold cover factor $c_1^*(S_\varphi)=1$, but
origami cover factor $c^*(S_\varphi)>1$. \fref{disk} shows four
examples. We remove
a circle segment  defined
by a chord $AB$ at distance~$\cos\varphi$ from the disk center~$C$. We
then add back the area bounded by small loops of lima\c{c}ons traced by
reflections of $A$ and $B$ across lines at distance~$\cos\varphi$
from~$C$.  For $\varphi\le\pi/4$,
the lima\c{c}ons do not appear on the boundary of $S_\varphi$, and
for $\varphi> \arctan(2)\approx 63.4^\circ$, the interior of $AB$ does
not appear.

First, we show that the reflection traces a lima\c{c}on, usually defined by tracing a point as one circle is rotated about another of the same radius.  Its standard expression in polar coordinates is $r(\theta)=a+b\cos\theta$.  
Let $\cal C$ denote the circle of radius $\cos\varphi$ centered at~$C$.

\begin{lemma}\label{limacon} 
The loci of the reflections of $A$ about all lines  for  which circle $\cal C$ is contained in one closed halfplane and point~$A$ in the other closed halfplane
 is  the inner loop of a lima\c{c}on.  
\end{lemma}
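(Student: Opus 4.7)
The plan is to place convenient coordinates, parametrize the separating tangent lines of $\mathcal{C}$, write the reflection of $A$ in closed form, and recognize the resulting polar equation as that of a limaçon with an inner loop.

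Put $C$ at the origin and $A=(1,0)$ on the disk boundary, and write $a:=\cos\varphi$ for the radius of $\mathcal{C}$, so $0<a<1$. I would first reduce to tangents of $\mathcal{C}$: any separating line with a given outward unit normal $\mathbf{n}$ can be shifted toward $A$ until it becomes tangent to $\mathcal{C}$ without losing separation, and this tangent maximizes the distance from $A$ among parallel separating lines. Hence the outer boundary of the reflection locus is the curve of tangential reflections, and reflections across strictly non-tangent separating lines fill the interior by sliding along the normal rays toward $A$.

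Next, parametrize tangents by their tangent point $T(\theta)=a(\cos\theta,\sin\theta)$ with outward unit normal $\mathbf{n}_\theta=(\cos\theta,\sin\theta)$. Separation of $A$ and $\mathcal{C}$ translates to $\langle A,\mathbf{n}_\theta\rangle=\cos\theta\ge a$, i.e.\ $\theta\in[-\varphi,\varphi]$. The reflection of $A$ across such a tangent is
\[
A'(\theta)\;=\;A-2(\cos\theta-a)\,\mathbf{n}_\theta,
\]
whose displacement from $A$ has length $2(\cos\theta-a)$ and direction $-\mathbf{n}_\theta$. Introducing polar coordinates $(r,\psi)$ centered at $A$ with polar axis $\psi=0$ pointing toward $C$, the direction $-\mathbf{n}_\theta$ matches polar angle $\psi=\theta$, so the tangential locus satisfies
\[
r(\psi)\;=\;2\cos\psi-2a,\qquad\psi\in[-\varphi,\varphi].
\]

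This is a limaçon $r=b+c\cos\psi$ with $b=-2a$ and $c=2$, or equivalently, after shifting $\psi$ by $\pi$ and flipping the sign of $r$, the standard limaçon $r=2a+2\cos\psi'$; since $a<1$ we have $|b|<c$, so the curve has an inner loop, and the interval $\psi\in[-\varphi,\varphi]$ is exactly where $r\ge 0$ with $r=0$ at the endpoints, giving a single traversal of that inner loop from $A$ out to its farthest point $C$ at $\psi=0$ and back. Combined with the fact that non-tangent separating reflections fill in the enclosed region, the total locus is the inner loop of the limaçon. The only real subtlety is the sign-and-angle bookkeeping that produces $\psi=\theta$ and identifies which arc of $\psi$ gives the inner loop; both are routine once the coordinates are fixed.
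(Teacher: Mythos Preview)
Your proof is correct and follows essentially the same route as the paper's: fix coordinates, parametrize the separating tangents of $\mathcal{C}$, compute the reflection of $A$, and read off the polar equation of a lima\c{c}on centered at $A$. The paper places the origin at $A$ and scales $\mathcal{C}$ to unit radius, obtaining $r(\theta)=2-2d\cos\theta$ with $d>1$; your $r(\psi)=2\cos\psi-2a$ with $a<1$ is the same curve after rescaling and reversing the polar axis. One small slip: the farthest point of the loop at $\psi=0$ is not $C$ in general, but the point on ray $AC$ at distance $2-2a$ from $A$; this does not affect the argument.
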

\begin{proof}
For this proof only, choose a coordinate system with origin $A$, scaled so $\cal C$ is a unit circle with center at  
$C=(-d,0)$ with $d>1$.
The boundary of the loci consists of the  reflections of $A$ about lines tangent to $\cal C$, which can be parametrized by $\theta$.  Let $v=(\cos\theta,\sin\theta)$.  
The  reflection of $A$ about the line tangent to $\cal C$ at $C+v$ is
 $A-2\bigl((A-C-v)\cdot v\bigr)v$, which can be written in polar form as $r(\theta)=2-2d\cos\theta$.  
To see that we get the inner loop, notice that
the tangents to the circle $\cal C$ through $A$  are the extreme lines that satisfy the hypothesis, and these reflect $A$ to itself.
\end{proof}
With this formula, one can verify that the distance from $C$ is unimodal, with the maximum at $A$ and minimum at the other intersection of the loop with $CA$.

Another standard characterization of this loop of the lima\c{c}on\footnote{A demo by Daniel Joseph: \url{http://demonstrations.wolfram.com/LimaconsAsEnvelopesOfCircles/}} is as the intersection of disks centered on~$\cal C$ and containing~$A$.  We know how disks fold, so:
\begin{lemma}\label{limaconfold}
Let $\ell$ be any line that does not intersect the arc of $\cal C$ that lies between the two tangents through~$A$.  
The image of the lima\c{c}on loop by folding across  $\ell$ is contained in the loop.
\end{lemma}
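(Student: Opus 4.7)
The plan is to exploit the intersection-of-disks characterization stated in the paragraph immediately before this lemma. Under that characterization, the closed region $L$ bounded by the inner loop equals
$L = \bigcap_{q\in\cal C} D_q$,
where $D_q$ denotes the smallest closed disk centered at $q$ that contains $A$ (equivalently, the disk centered at $q$ passing through $A$). Writing $\phi_\ell$ for the folding map across $\ell$, this rewriting reduces the claim to the statement $\phi_\ell(D_q)\subseteq D_q$ for every $q\in\cal C$: once that is known, the monotonicity $\phi_\ell(L)\subseteq \phi_\ell(D_q)$ (from $L\subseteq D_q$) gives $\phi_\ell(L)\subseteq D_q$ for every $q$, and intersecting over $q$ yields $\phi_\ell(L)\subseteq L$.

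So the core assertion is a one-disk statement: for any closed disk $D$ and any line $\ell$, $\phi_\ell(D)\subseteq D$. By a similarity I may assume $D$ is the unit disk at the origin and $\ell$ is the vertical line $\{x=a\}$ with $a\ge 0$; if $a\ge 1$, there is nothing to check. For a point $(x,y)\in D$ with $x>a$, the reflection $(2a-x,y)$ satisfies $(2a-x)^2\le x^2$ (since $0\le a<x$), and hence $(2a-x)^2+y^2\le x^2+y^2\le 1$, placing the reflection in $D$. This single inequality is the only explicit computation in the proof.

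I expect the main obstacle to be nothing more than a careful statement of the intersection-of-disks characterization of $L$, which the preceding paragraph cites as ``standard''. Once that characterization is established, the folding argument reduces to the disk-by-disk step above and needs no further case analysis. The hypothesis that $\ell$ avoids the arc of $\cal C$ between the two tangents through $A$ plays no direct role in this intersection argument; its purpose is really to match the geometric setting of the surrounding proof about $S_\varphi$, where fold lines satisfying this property are precisely the ones that preserve the lima\c{c}on loop as an intact component of the folded shape rather than carrying it elsewhere.
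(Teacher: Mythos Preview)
Your approach is exactly the paper's: write the loop as an intersection of disks centred on $\cal C$ and passing through $A$, and then fold disk by disk. The paper's proof is two sentences that say precisely this.

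There is, however, a genuine gap, and it lies exactly where you dismiss the hypothesis. Your ``core assertion'' that $\phi_\ell(D)\subseteq D$ for \emph{any} closed disk $D$ and \emph{any} line $\ell$ is false as stated. The map $\phi_\ell$ is a \emph{fixed} fold: it keeps one closed halfplane $\ell^+$ pointwise and reflects the other. The containment $\phi_\ell(D)\subseteq D$ holds only when the centre of $D$ lies in $\ell^+$; if the centre is in the open reflected halfplane, the image of the centre already leaves $D$. In your normalization you write ``by a similarity I may assume \ldots\ $a\ge 0$'', but that is not a WLOG: once $\phi_\ell$ is fixed, the signed position of the centre relative to $\ell^+$ is an invariant you cannot choose. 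Your inequality $(2a-x)^2\le x^2$ needs $0\le a\le x$, i.e.\ the centre on the kept side.

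This is exactly the role of the hypothesis. The relevant characterization (see the figure caption) takes the intersection over disks whose centres lie on the \emph{arc} of $\cal C$ between the two tangents through $A$, not over all of $\cal C$. The assumption that $\ell$ misses this arc is what guarantees a halfplane $\ell^+$ containing every such centre, so that one and the same fold is ``toward the centre'' for all the disks simultaneously. The paper's proof begins by naming this halfplane; once you add that sentence (and restrict the intersection to the arc), your argument is complete and identical to the paper's.
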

\begin{proof}
Let $\ell^+$ denote the closed halfplane of $\ell$ that contains the arc, and thus contains all centers of disks used to define the lima\c{c}on loop $\cal L$.  
For any individual disk $O$ we fold a sector toward the center; the image of folding $O\setminus \ell^+$ across $\ell$ remains inside $O\cap\ell^+$. Since this is true for all disks, the portion of the loop ${\cal L}\setminus\ell^+$ folds inside the intersection~${\cal L}\cap\ell^+$. 
\end{proof}

Now we can determine the 1-fold cover factor for $S_\varphi$.
\begin{theorem}
The shape $S_\varphi$, with $\varphi\in (0,\pi/2)$,
has 1-fold cover factor $c_1^*(S_\varphi)=1$.
\end{theorem}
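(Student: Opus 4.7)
The plan is to prove that every image $F\in\mathcal{F}_1(S_\varphi)$ of a single fold of $S_\varphi$ across a line $\ell$ is contained in a congruent copy of $S_\varphi$; combined with the trivial lower bound $S_\varphi\in\mathcal{F}_1(S_\varphi)$, this yields $c_1^*(S_\varphi)=1$. The central tool is Lemma~\ref{limaconfold}, which characterizes exactly when a lima\c{c}on loop folds into itself, together with the elementary fact that any disk folded across a line remains inside the disk. I would split the analysis based on the position of $\ell$ relative to the circle $\mathcal{C}$ of radius $\cos\varphi$ about the centre $C$.

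In the simpler case, $\mathcal{C}$ lies entirely in one closed halfplane of $\ell$ and we fold the other side. Then $\ell$ does not cross either of the two arcs of $\mathcal{C}$ between the tangents through $A$ and through $B$, so Lemma~\ref{limaconfold} applies to both lima\c{c}on loops, each of which therefore folds into itself. The unit-disk portion folds into the unit disk in the standard way. Any folded disk point that would otherwise land inside the removed segment is absorbed by a lima\c{c}on loop, because by Lemma~\ref{limacon} these loops are precisely the envelopes of reflections of $A$ and $B$ across tangent lines to $\mathcal{C}$, and non-tangent separating lines contribute reflections in the loop interior. This gives $F\subseteq S_\varphi$ with no repositioning.

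The substantive case is when $\ell$ crosses $\mathcal{C}$, which in particular covers any fold line passing through or near $C$. Here Lemma~\ref{limaconfold} does not apply directly to at least one loop, and a bare fold can place material outside $S_\varphi$; for example, folding across the horizontal line through $C$ sends the lower half of the disk into the upper half, a region overlapping the removed segment. The approach is to cover such an $F$ with a congruent copy $S_\varphi'$ of $S_\varphi$ chosen via an appropriate isometry---typically a rotation about $C$, possibly followed by a reflection---so that the chord of $S_\varphi'$ lies on the far side of $\ell$ from the folded content, and the lima\c{c}on loops of $S_\varphi'$ surround the lima\c{c}on content of the folded image. The argument then reduces to the preceding case applied inside $S_\varphi'$.

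The main obstacle is exhibiting the correct isometry $S_\varphi\to S_\varphi'$ in each sub-subcase (governed by the position of $\ell$ in $\mathcal{C}$ and by which of the ranges $\varphi\le\pi/4$, $\pi/4<\varphi\le\arctan 2$, or $\varphi>\arctan 2$ we are in), and then verifying via Lemma~\ref{limacon} that the lima\c{c}on loops of $S_\varphi'$ absorb both the reflected lima\c{c}on loops of the original $S_\varphi$ and the chord-adjacent disk content. This amounts to tracking the fold action on the distinguished points $A$, $B$, and their reflections, and comparing them to the lima\c{c}on envelopes---a direct but case-heavy computation.
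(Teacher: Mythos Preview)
Your case split matches the paper's, but both branches have gaps. In the ``simpler'' case ($\ell$ at distance $\ge\cos\varphi$ from $C$) you assert that any folded unit-disk point landing in the removed segment is absorbed by a lima\c{c}on loop ``because by Lemma~\ref{limacon} these loops are precisely the envelopes of reflections of $A$ and $B$.'' But Lemma~\ref{limacon} describes the locus of reflections of the \emph{specific} point $A$ (or $B$); it says nothing about where a generic boundary point $Q$ of the unit disk lands. The paper closes this gap by observing that the reflections of any such $Q$ themselves form a lima\c{c}on loop (Lemma~\ref{limacon} applied to $Q$), rotating that loop about $C$ to compare it with the loop at $A$ or $B$, and then invoking the unimodality of the distance from $C$ along the loop to conclude $Q'\in S_\varphi$. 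Without that unimodality step your argument is only a restatement of the claim. You also do not address the partial-fold case in which $S_\varphi\setminus\ell^+$ has two components and only one is reflected; the paper handles this separately.

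In the ``substantive'' case ($\ell$ crosses $\mathcal{C}$) your plan to rotate and then ``reduce to the preceding case inside $S_\varphi'$'' cannot work as stated: after rotating, $\ell$ is still within $\cos\varphi$ of $C$, so you are not back in the first case, and tracking the lima\c{c}on loops through the fold and the rotation is exactly the mess you are trying to avoid. The paper's argument here is much cleaner and does not touch the loops at all: since each lima\c{c}on loop has its maximum distance from $C$ at $A$ (respectively $B$), the whole shape satisfies $S_\varphi\subseteq$ unit disk; hence the folded $S_\varphi$ is contained in the folded unit disk, which lies in a half-disk at distance $<\cos\varphi$ from $C$ and therefore, after rotating $\ell$ parallel to $AB$, sits inside $S_\varphi$. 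Recognising $S_\varphi\subseteq$ unit disk is the missing idea that collapses your ``case-heavy computation'' to two lines.
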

\begin{proof}
For a given line $\ell$, let $\ell^+$ be the closed halfplane containing~$C$. 
Construct a folded shape $F$ by reflecting
across line~$\ell$ one or more components of $S_\varphi\setminus \ell^+$.  
We consider cases for the fold
line~$\ell$ based on its distance from~$C$ and the types of boundary
curves of~$S_\varphi$ that it intersects.  

Suppose first that the fold line $\ell$ is less than $\cos\varphi$
from~$C$ (i.e., intersects a dotted circle in \fref{disk}) and that all
components of $S_\varphi\setminus \ell^+$ are folded over.   
Fold the entire 
unit disk along~$\ell$ then rotate about $C$ to make $\ell$ parallel to
$AB$. Since this folded unit disk is covered by $S_\varphi$, the subset
$F$ is certainly covered by~$S_\varphi$.

We sketch the arguments that, for all remaining cases, $F$ remains inside $S_\varphi$
with no rotation needed. It will be enough to consider points of
$S_\varphi\setminus \ell^+$ that fold to locations above $AB$. This can
only occur for $\varphi > \pi/3$, which happens to place $C$ inside the
two lima\c{c}on loops. (This is not crucial to the argument, but does
simplify its geometric interpretation.)

We begin with the cases in which the fold line $\ell$ 
is at least $\cos\varphi$ from~$C$.  Lemma~\ref{limaconfold} implies that 
points in lima\c{c}on loops remain inside their loops after folding. 
So it is enough to consider any point $Q$ on the arc of the unit circle that bounds $S_\varphi$ between $A$ and $B$.  
By Lemma~\ref{limacon}, the image of $Q$ under all possible folds in this case is also a  
 lima\c{c}on loop;  we can rotate this loop around $C$ to align with the loop at $A$ or $B$. 
 
For our chosen fold line $\ell$ we get an image $Q'$ above $AB$. If we extend $CQ'$ through $Q'$ we  hit the loop for $Q$ before we hit the last of  the loops for  $A$ or $B$, thanks to unimodality of distance to $C$.  Thus $Q'$ is inside $S_\varphi$.  

In the remaining case
 $S\setminus \ell^+$ has two components  and we choose to fold only one.  
One component must be a portion of a lima\c{c}on loop that $\ell$ intersects twice above $AB$;  Lemma~\ref{limaconfold} implies that these may be safely folded.  
So we may assume that the fold line $\ell$ has positive slope,  intersects the loop of $B$ twice above $AB$, and intersects the dotted circle $\cal C$ in the upper left quadrant. If 
 $\ell$ crosses the boundary of $S_\phi$ on $AB$, then all points of the folded component go below $AB$.  Otherwise, $\ell$ crosses on the lima\c{c}on loop for $A$. The centers of disks   
that define the portion of this loop above $AB$ are in the lower left quadrant of $\cal C$, so lie in $\ell^+$. As in the proof of Lemma~\ref{limaconfold}, points that fold above $AB$ remain in the intersection of these folded disks, and thus remain in the loop. 
\end{proof}

To see that the origami cover factor
$c^*(S_\varphi)>1$, find the two points of $S_\varphi$ farthest from $(0,-1)$, 
then  crimp fold to narrow the angle between them.
 
The minimum area of $S_\varphi$ is achieved by $\varphi\approx 61.5^\circ$, which occupies about 83.7\% of the unit disk.  The shape is between $S_{\pi/3}$ and $S_{\arctan(2)}$ in \fref{disk}.  


\section{Arbitrary Folds}\label{sec:arb}

\subsection{Simply-Connected Shapes}
\label{sec:simple}

In this section we show that, for a simply connected shape $S$, there is a lower bound
for the origami cover factor $c^*(S)$ that is within a constant factor of the
upper bound given by \lref{ub}.  
\begin{theorem}
\label{thm:simple}
Let $S$ be a simply connected shape with inradius $r$, geodesic radius~$R$, and geodesic diameter~$D$.
Then $\kappa R/r\le D/(2\pi r)\leq c^*(S)\leq R/r$ for
 $\kappa=\sqrt3/(2\pi)\approx 0.27566$. 
\end{theorem}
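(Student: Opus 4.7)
The rightmost bound $c^*(S)\le R/r$ is exactly \lref{ub}. The leftmost bound $\kappa R/r\le D/(2\pi r)$ is a rewriting of the planar Jung inequality $\sqrt{3}R\le D$ recalled in the introduction, since $\kappa=\sqrt{3}/(2\pi)$. So the real work is the middle inequality $D/(2\pi r)\le c^*(S)$.

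My plan is to exhibit, for each integer $n\ge 3$, a fold $f_n\in\mathcal{F}(S)$ whose image forces any cover $cS$ to contain a disk of radius $D/(2n\tan(\pi/n))$, and then let $n\to\infty$. Let $\gamma:[0,D]\to S$ be an arclength-parametrized geodesic realizing the geodesic diameter. The idea is to ``roll'' $\gamma$ into a regular $n$-gon $\Pi_n$ of side length $D/n$ and perimeter $D$. Since $\gamma$ is an intrinsic straight line in the flat intrinsic metric on $S$, a sufficiently thin tubular neighborhood of $\gamma$ in $S$ is intrinsically isometric to a flat rectangle $[0,D]\times[-\delta,\delta]$. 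I would partition this rectangle by $n-1$ lines perpendicular to its long axis into $n$ congruent sub-rectangles and map them rigidly into the plane so that consecutive sub-rectangles meet at a shared crease, turning by exterior angle $2\pi/n$; the images of their long axes then form the polygon $\Pi_n$. Extend to the remainder of $S$ by any piecewise-isometric completion, which is always possible since arbitrary folds are unconstrained in this regard.

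Now the punchline: $cS$ is simply connected (a rigid-plus-scaling copy of $S$) and contains the Jordan curve $\Pi_n\subseteq f_n(S)$, hence must contain the bounded complementary region, which is a filled regular $n$-gon of inradius $D/(2n\tan(\pi/n))$. Therefore the inradius $cr$ of $cS$ satisfies $cr\ge D/(2n\tan(\pi/n))$ for every admissible $c$, and sending $n\to\infty$ (so that $n\tan(\pi/n)\to\pi$) gives $cr\ge D/(2\pi)$, i.e., $c\ge D/(2\pi r)$. Combining with the Jung bound yields the full chain of inequalities in the theorem.

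The main obstacle is making the construction of $f_n$ rigorous. Extrinsically $\gamma$ may bend at reflex points of $\partial S$, so the crease lines and tube must be described via intrinsic (flat) coordinates on $S$ rather than via the plane; one must check that the two-sided $\delta$-neighborhood of $\gamma$ in $S$ really is isometric to a flat rectangle for some positive $\delta=\delta(n)$, using simple connectivity of $S$ together with the minimizing property of $\gamma$ to rule out self-overlap, and that any places where the tube hits $\partial S$ can be absorbed into the piecewise-isometric extension on $S$ minus the tube. Once this technicality is dispatched, the rest of the argument — Jordan curve theorem plus simple connectivity of $cS$ — is essentially immediate.
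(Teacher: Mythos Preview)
Your overall strategy --- fold the diametral geodesic $\gamma$ onto a large closed curve, then use simple connectivity of $cS$ to force $cS$ to contain the bounded region --- matches the paper's. The gap is in the construction of $f_n$, and it is not a technicality.

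The assertion that a two-sided $\delta$-neighborhood of $\gamma$ in $S$ is intrinsically isometric to $[0,D]\times[-\delta,\delta]$ is false in general. When $S$ is non-convex, the geodesic $\gamma$ typically passes through reflex vertices of $\partial S$, and at such a vertex $v$ the path $\gamma$ is literally on the boundary: on one side of $\gamma$ the reflex corner bites into any putative tube, so no two-sided strip around $\gamma$ exists in $S$ at all, for any $\delta>0$. Worse, $\gamma$ is not ``intrinsically straight'' there in any useful sense: it genuinely bends at $v$, and those bends can occur on \emph{both} sides of $\gamma$ as it winds through the domain. Consequently your $n-1$ perpendicular creases do not suffice --- the image $f_n(\gamma)$ would retain all the original kinks of $\gamma$ in addition to your intended polygon corners, and would not be $\Pi_n$.

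This is precisely the obstruction the paper's proof works to remove. The paper observes that $\gamma$ alternates between left turns and right turns at boundary contacts, identifies the straight ``inflection edges'' of $\gamma$ separating turns of opposite sign, and folds $S$ along each of those edges first. After this preprocessing, $\gamma$ turns only to the right, so the boundary contacts are now all on one side and there is room for one-sided fibers orthogonal to $\gamma$ on the other side; crimping along those fibers then rolls $\gamma$ into (an approximation of) a circle of circumference $D$. Your plan can be repaired in exactly this way, but the inflection-edge step is the substantive idea, not a detail to be dispatched.
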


\begin{proof}
Again, the upper bound is from \lref{ub}. 
The basic idea  for the lower bound is to find a path in~$S$ that can be folded into a
large circle, which 
must then 
be covered by a scaled copy of
the incircle of~$S$.  Here, for brevity, we use a  path of length $D$,
the geodesic diameter. 

\begin{figure}[ht]
\centering
{\includegraphics[width=.7\columnwidth]{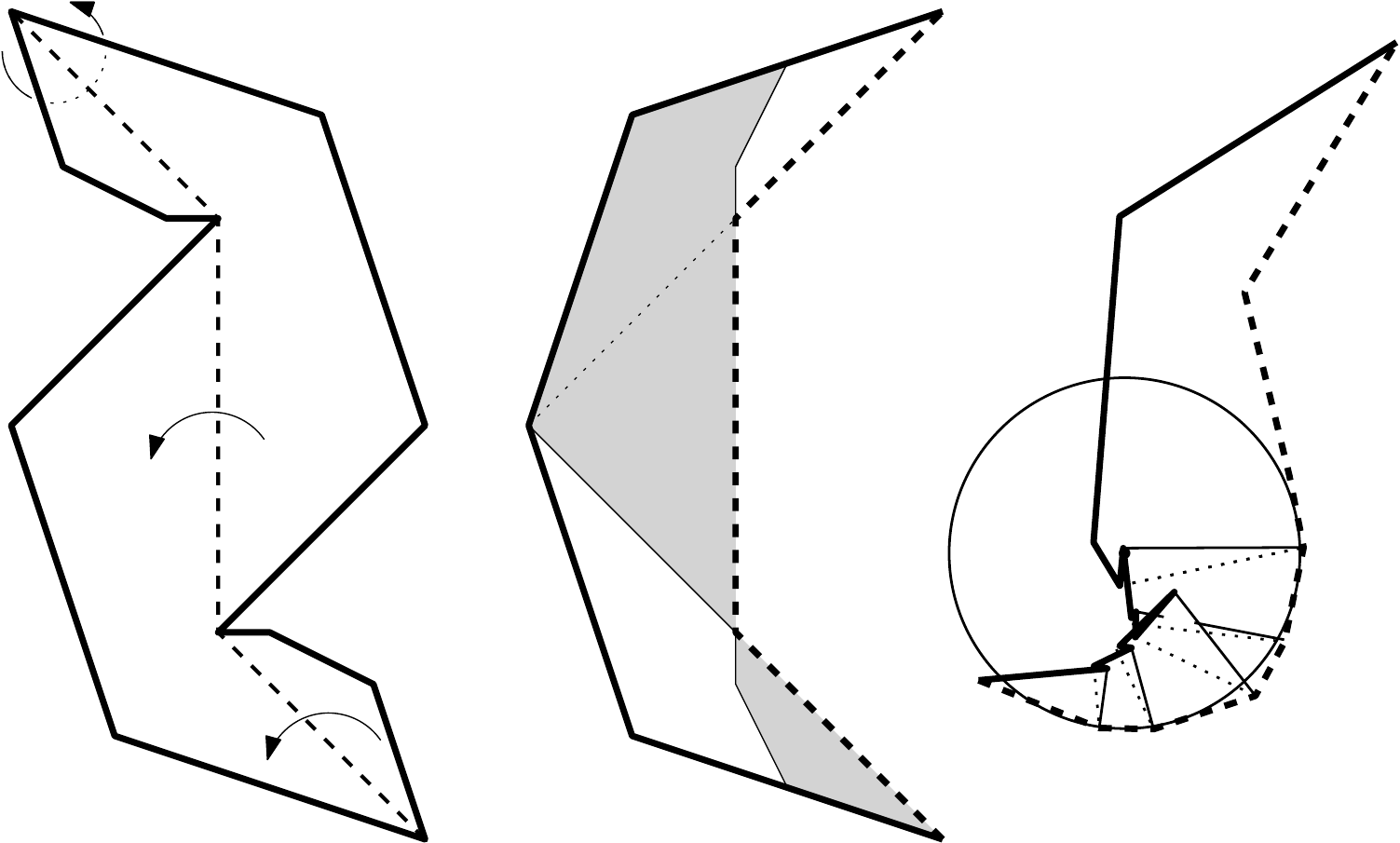}}
\caption{For \protect\tref{simple}, folding inflection edges to make a generalized spiral, then crimping to approximate a circle that must be covered by the incircle.}
\label{fig:spiral}
\end{figure}
A {\it generalized spiral} is a simply connected region composed of 
consistently orientable plane patches having a distinguished shortest
path $\gamma$ that follows the boundary and never turns to the left.  A
generalized spiral may overlap itself if projected onto a plane, but
we can think of it as embedded in a covering space of the plane. 

Ordinarily, a diameter path $\gamma$ will alternate between sequences of
left turns and right turns at boundary points; a portion of the path
between opposite turns is a line segment that we can call an {\it
  inflection edge}.  We can simply fold along every inflection
edge, gluing doubled layers along these edges, to turn $\gamma$ into a
path that goes only straight or to the right.  Folding any
non-boundary edges creates a generalized
spiral with path~$\gamma$.  These folds are along lines of the geodesic
path, so $\gamma$ remains a shortest path between its endpoints.

We  fold the generalized spiral into a left-turning circle
with circumference approaching the length of $\gamma$.  If we sweep a paired 
point and normal vector along $\gamma$, we can think of painting a portion of
the generalized spiral with fibers that each start on $\gamma$ and grow orthogonal to a local tangent (because $\gamma$ is a shortest path) and that are disjoint (because the sweep
in position and angle is monotonic).  We construct a circle whose
circumference is arbitrarily close to the length of $\gamma$ by crimp
folds that align successive fibers of $\gamma$ with the circle
center. \fref{spiral} shows an example.  It does not matter how far
the fibers extend towards or beyond the circle; in order to cover the
boundary of the circle, the inradius~$r$ must be scaled to the circle
radius, which is $D/(2\pi)$.    
\end{proof}



\subsection{Disks with Bumps}\label{sec:disk}

Because the radius of a disk is simultaneously the inradius and the
geodesic radius, \lref{ub} implies that the cover factor of a disk,
$c^*(\bigcirc)$, is $1$. 
It is interesting to note that 
there are other shapes $S$ with
$c^*(S)=1$; here is one simple family.   

In a unit disk centered at $C$ with a chord $AB$, choose a point~$D$
between $C$ and the midpoint of~$AB$.  Add the disk centered at $D$ of
radius~$|AD|$.  Thus, we have a family of shapes $S_{d,e}$,
parameterized by two distances, $d=|CD|$ and 
$e=$ distance from $C$ to chord~$AB$, satisfying 
 $0< d\le e< 1$. See~\fref{bumps}.

\begin{figure}[ht] 
\centering
\includegraphics
{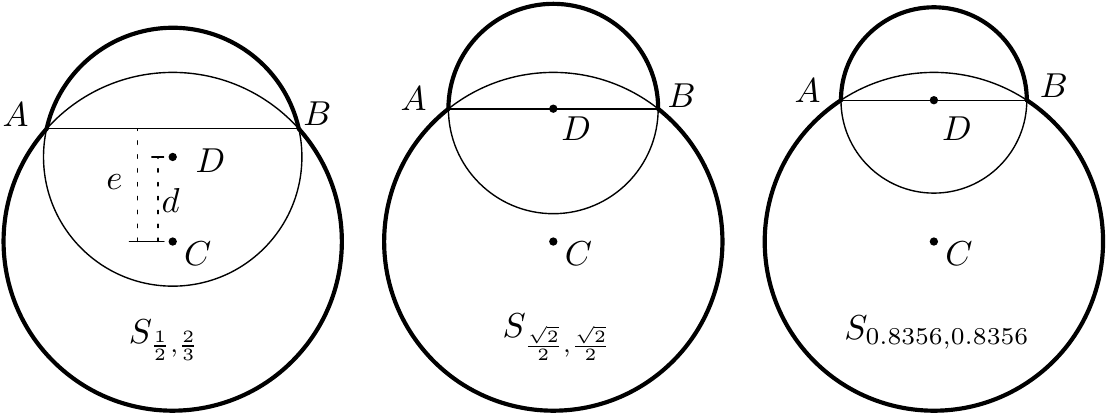}
\caption{Shapes $S_{d,e}$ with $c^*(S_{d,e})=1$.  }
\label{fig:bumps}
\end{figure}

\begin{lemma}\label{lem:disk}
The shape $S_{d,e}$, with $0< d\le e< 1$,
has origami cover factor $c^*(S_{d,e})=1$.
\end{lemma}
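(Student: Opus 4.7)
The plan is to exploit the decomposition $S_{d,e}=\mathcal{D}_1\cup\mathcal{D}_2$, where $\mathcal{D}_1$ is the unit disk centered at $C$ and $\mathcal{D}_2$ is the disk of radius $\rho:=|AD|=\sqrt{(e-d)^2+(1-e^2)}$ centered at $D$. Because any arbitrary fold $\phi$ preserves path lengths and every point of $\mathcal{D}_1$ lies within distance $1$ from $C$ along a straight segment in $\mathcal{D}_1$, we get $\phi(\mathcal{D}_1)\subseteq B(\phi(C),1)$; analogously $\phi(\mathcal{D}_2)\subseteq B(\phi(D),\rho)$. The same argument applied to the segment $CD\subset S_{d,e}$ gives $\delta:=|\phi(C)-\phi(D)|\le d$.

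I would then construct a covering copy of $S_{d,e}$ explicitly: place its unit disk at $B(\phi(C),1)$ and its small disk at $B(Z,\rho)$, where $Z=\phi(C)+d\hat{u}$ with $\hat{u}=(\phi(D)-\phi(C))/\delta$ (an arbitrary unit vector if $\delta=0$, in which case $B(\phi(D),\rho)\subseteq B(\phi(C),1)$ already since $\rho<1$, so no further argument is needed). Since the big disk trivially contains $\phi(\mathcal{D}_1)$, the crux is to show
\[
\phi(\mathcal{D}_2)\setminus B(\phi(C),1)\;\subseteq\;B(Z,\rho).
\]

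This is the main obstacle, because the small disk of the cover has been pushed \emph{further} from $\phi(C)$ than the enclosing disk $B(\phi(D),\rho)$ of $\phi(\mathcal{D}_2)$, by an amount $d-\delta\ge0$. Setting coordinates with $\phi(C)=0$ and $\hat{u}=(1,0)$, so $\phi(D)=(\delta,0)$ and $Z=(d,0)$, the algebraic identity
\[
(x-d)^2+y^2-\bigl((x-\delta)^2+y^2\bigr)=(d-\delta)(d+\delta-2x)
\]
shows that a point $p=(x,y)\in B(\phi(D),\rho)$ automatically lies in $B(Z,\rho)$ as soon as $x\ge(d+\delta)/2$. It then remains to verify this inequality for every $p$ in $B(\phi(D),\rho)$ that additionally satisfies $x^2+y^2>1$.

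Combining the two constraints $(x-\delta)^2+y^2\le\rho^2$ and $x^2+y^2>1$ yields $x>(1-\rho^2+\delta^2)/(2\delta)$, and I would finish by checking that this lower bound is at least $(d+\delta)/2$. That inequality simplifies to $d\delta\le 1-\rho^2=d(2e-d)$, i.e., $\delta\le 2e-d$. Since $\delta\le d$ and the hypothesis $d\le e$ gives $d\le 2e-d$, we indeed have $\delta\le 2e-d$. Hence the constructed copy covers $\phi(S_{d,e})$, proving $c^*(S_{d,e})=1$.
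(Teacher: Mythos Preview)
Your proof is correct and follows the same strategy as the paper: cover $\phi(S_{d,e})$ by $B(\phi(C),1)\cup B(\phi(D),\rho)$ using non-expansiveness of folds, then slide a congruent copy of $S_{d,e}$ so that its big center sits at $\phi(C)$ and its small center lies on the ray through $\phi(D)$ at distance $d$. The paper dispatches the remaining step with the one-line assertion that $S_{d,e}$ ``covers all disks of radius $r$ that are centered between $C$ and $D$''; your coordinate computation proves exactly this claim and, pleasingly, pinpoints where the hypothesis $d\le e$ is actually needed (via $\delta\le d\le 2e-d$), which the paper leaves implicit.
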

\begin{proof} 
Shape $S_{d,e}$ is the union of a unit disk centered at $C$ and 
a disk centered at $D$ whose radius we denote~$r$. Note that by
construction the boundaries of the disks intersect at $A$ and $B$.
This shape also covers all disks of radius $r$ that are centered between
$C$ and $D$.  

Now, in an
arbitrary folded state $S'_{d,e}$, consider the locations of these centers, $C'$
and $D'$.  Placing a unit disk centered at $C'$ and a radius $r$ disk
centered at $D'$ will cover all points of $S'_{d,e}$.  Because
$|C'D'|\le |CD|$, this pair of disks will be covered by placing a
copy of $S_{d,e}$ with $C$ at $C'$ and $D$ on the ray $\ray{C'D'}$.
\end{proof}

Choose any $d\in(0,1)$ and for all $e\in [d,1)$ shape $S_{d,d}$ covers
$S_{d,e}$, so these extremal members of the family have $AB$ as the
diameter of the smaller disk. Just for the sake of curiosity, the example with $d=e=\sqrt2/2$ minimizes
the ratio of inradius to circumradius,
$R/r=(1+\sin\theta+\cos\theta)/2\approx0.8284$, and the example with $d\approx
0.8356$ minimizes the fraction of the circumcircle covered,
$(\pi(1+\sin^2\theta)+\sin2\theta -\theta)/(\pi R^2)\approx 0.7819$.



\section{Open Problems}

The most interesting questions are whether $c^*(\triangle)=c_1^*(\triangle)$ and $c^*(\square)=c_1^*(\square)$, and whether we can completely characterize those shapes with origami or 1-fold cover factor of unity.

\section*{Acknowledgments}
This work began at the 28th Bellairs
Workshop
, March 22-29, 2013.  We thank all other participants for the
productive and positive atmosphere, in particular Godfried Toussaint
for co-organizing the event. We also thank the anonymous 
reviewers for detailed and helpful comments.

{\nocite{mnp-gg-90}
\bibliographystyle{abbrv}
\bibliography{cover}

\begin{thebibliography}{10}

\bibitem{aaddfhlsw-cfs-13}
O.~Aichholzer, G.~Aloupis, E.~D. Demaine, M.~L. Demaine, S.~P. Fekete,
  M.~Hoffmann, A.~Lubiw, J.~Snoeyink, and A.~Winslow.
\newblock Covering folded shapes.
\newblock In {\em Proc. 25th Canad. Conf. Comput. Geom.}, pages 73--78, 2013.

\bibitem{a-ap-05}
V.~I. Arnold.
\newblock Problem 1956--1.
\newblock In {\em Arnold's Problems}. Springer, Berlin, 2005.

\bibitem{bmp-rpdg-05}
P.~Brass, W.~O.~J. Moser, and J.~Pach.
\newblock {\em Research Problems in Discrete Geometry}.
\newblock Springer, 2005.

\bibitem{bs-lbluc-05}
P.~Brass and M.~Sharifi.
\newblock A lower bound for {L}ebesgue's universal cover problem.
\newblock {\em Int. J. Comput. Geometry Appl.}, 15(5):537--544, 2005.

\bibitem{dgn-skersao-05}
S.~Das, P.~P. Goswami, and S.~C. Nandy.
\newblock Smallest $k$-point enclosing rectangle and square of arbitrary
  orientation.
\newblock {\em Inform. Process. Lett.}, 94(6):259–--266, 2005.

\bibitem{da-frecp-84}
N.~A.~A. DePano and A.~Aggarwal.
\newblock Finding restricted $k$-envelopes for convex polygons.
\newblock In {\em Proc. 22nd Allerton Conf. on Comm., Ctrl, \& Comp}, pages
  81--90, 1984.

\bibitem{kr-jp-11}
T.~Khovanova and A.~Radul.
\newblock Jewish problems.
\newblock \url{http://arxiv.org/abs/1110.1556v2}, Oct. 2011.

\bibitem{l-odsmmaa-03}
R.~J. Lang.
\newblock {\em Origami Design Secrets: Mathematical Methods for an Ancient
  Art}.
\newblock A. K. Peters, 2003.

\bibitem{mnp-gg-90}
X.~Markenscoff, L.~Ni, and C.~H. Papadimitriou.
\newblock The geometry of grasping.
\newblock {\em Internat. J. Robot. Res.}, 9(1), Feb. 1990.

\bibitem{npl-wplm-92}
R.~Norwood, G.~Poole, and M.~Laidacker.
\newblock The worm problem of {L}eo {M}oser.
\newblock {\em Discrete {\&} Computational Geometry}, 7:153--162, 1992.

\bibitem{p-uev-20}
J.~P\'al.
\newblock {\"U}ber ein elementares {V}ariationsproblem.
\newblock {\em Math.-fys. Medd., Danske Vid. Selsk.}, 3(2):1--35, 1920.

\bibitem{pww-cnsua-07}
C.~Panraksa, J.~E. Wetzel, and W.~Wichiramala.
\newblock Covering $n$-segment unit arcs is not sufficient.
\newblock {\em Discrete {\&} Computational Geometry}, 37(2):297--299, 2007.

\bibitem{rademacher1990enjoyment}
H.~Rademacher and O.~Toeplitz.
\newblock {\em The Enjoyment of Mathematics}.
\newblock Dover, 1990.

\bibitem{t-sgprc-83}
G.~Toussaint.
\newblock Solving geometric problems with the rotating calipers.
\newblock In {\em Proc. IEEE MELECON'83}, pages A10.02/1--4, 1983.

\end{thebibliography}

}
\end{document}